\documentclass[journal]{IEEEtran}
\usepackage{amsmath,amssymb,amsthm}
\usepackage{booktabs}
\usepackage{graphicx}
\usepackage{cite}
\usepackage{url}
\usepackage{placeins}

\newtheorem{theorem}{Theorem}
\newtheorem{lemma}{Lemma}
\newtheorem{corollary}{Corollary}
\newtheorem{proposition}{Proposition}
\theoremstyle{definition}

\theoremstyle{remark}

\newcommand{\F}{\mathbb{F}_2}
\newcommand{\kc}{k}

\newcommand{\bench}[1]{\texttt{\benchsplit#1\relax}}
\def\benchsplit#1-#2\relax{#1-\discretionary{}{}{}#2}

\newif\ifshowmat
\showmattrue

\title{A Version Space Approach for Digital Circuit Analysis}

\author{Mitchell~A.~Thornton,~\IEEEmembership{Senior Member,~IEEE}
\thanks{M. A. Thornton is with the Darwin Deason Institute for Cyber
Security and the Department of Electrical and Computer Engineering,
Southern Methodist University, Dallas, TX 75275 USA (e-mail:
mitch@smu.edu, ORCID \mbox{0000-0003-3559-9511}).}}

\begin{document}
\raggedbottom
\maketitle

\begin{abstract}
Many questions about a digital circuit take the same form.  A hidden object is
consistent with a set of observations, and one wants to know how many objects
remain consistent and which observation to make next.  The set of surviving
candidates is the \emph{version space}, and its size, reported on a logarithmic
scale, is a direct measure of how much the observations have settled.  This
paper develops the version-space view as a single method and applies it to two
problems in circuit analysis that are usually treated as unrelated.  The first
is probabilistic combinational equivalence checking, where the candidates are
Boolean functions and the observations are modified-Haar spectral
coefficients.  A method proposed in 2002 posed exactly this counting problem
and solved only two special cases, leaving the general case as an enumeration
that grew exponentially in the number of observations.  We close it: a
reparameterization onto block sums turns the dependence among nested
coefficients into locality, a sum--product recursion then counts the surviving
functions exactly in time polynomial in the truth-table size, closed forms
follow for a single coefficient, for a coefficient pair, and for every
ancestor-closed set, and the error of the independence approximation that the
2002 work resorted to is shown to equal a computable lattice index.  Every
formula is checked against exhaustive enumeration and reproduces the tables of
the 2002 paper.  The second application is key counting for logic-locked
netlists, where the candidates are keys and the observations are oracle
responses.  The same recursion, run over the gate-level factor graph, computes
the number of keys still consistent with a set of queries; across seventy
instances of the TrustHub obfuscation release the surviving entropy falls
below the advertised key length every time.  The two applications are one
method: a witness supplies observations, each observation removes candidates,
and the version space is counted exactly.
\end{abstract}

\begin{IEEEkeywords}
Version space, Haar transform, spectral techniques, combinational equivalence
checking, model counting, sum--product algorithm, logic locking, hardware
security, decision diagrams.
\end{IEEEkeywords}

\section{Introduction}
\label{sec:intro}

\IEEEPARstart{A}{} recurring pattern in digital circuit analysis is the
following.  Some object is hidden, a reference function, a secret key, a fault,
a hypothesis about structure.  A witness supplies observations, and each
observation is consistent with some of the candidate objects and inconsistent
with others.  As observations accumulate, the set of candidates that remain
consistent with all of them shrinks.  Two questions govern how the analysis
proceeds: how many candidates still remain, and which observation should be
requested next.

The set of candidates consistent with the observations seen so far is called
the \emph{version space}, a term from computational learning
theory~\cite{mitchell1982}, and its size is the standard measure of how far a
body of evidence has narrowed a hypothesis class~\cite{haussler1988}.  This
paper takes the version space as the organizing idea for a class of circuit
analyses and shows that computing its size exactly, rather than estimating it,
is both possible and useful in problems where it has previously been thought
out of reach.

\subsection{The Version Space as a Measure}
\label{sec:vsmethod}

Fix a finite set $\mathcal{H}$ of candidate objects and a stream of
observations $o_1, o_2, \dots$, each of which either agrees or disagrees with a
given candidate.  After $t$ observations the version space is
\[
V_t \;=\; \{\, h \in \mathcal{H} \;:\; h \text{ agrees with } o_1,\dots,o_t \,\},
\]
a nested, shrinking family $V_0 \supseteq V_1 \supseteq \cdots$.  The paper
reports $\log_2 |V_t|$ rather than $|V_t|$, for a reason that is the same in
both applications below.  Before any observation the witness holds
$|V_0|=|\mathcal{H}|$ candidates and knows nothing to separate them, so the
honest description of that state is the uniform distribution over
$\mathcal{H}$, whose Shannon entropy is $\log_2|\mathcal{H}|$.  Because every
surviving candidate remains perfectly consistent with everything observed, the
posterior after $t$ observations is uniform on $V_t$ and its entropy is
$\log_2|V_t|$.  The two figures are therefore directly comparable, and their
difference $\log_2|\mathcal{H}| - \log_2|V_t|$ is exactly the number of bits
the observations have supplied.  Two consequences are used throughout: because
the posterior is uniform, an exact count is an exact entropy with no
estimation step in between; and because the version space shrinks
monotonically, the sequence of counts records the rate at which evidence
accumulates rather than a single end-of-analysis number.

Two computational questions attach to the version space.  The first is
\emph{counting}: evaluate $|V_t|$.  Naively this is model counting, which is
$\#\mathrm{P}$-hard in general~\cite{valiant1979}, so the useful cases are
those in which the structure of the observations makes the count tractable.
The second is \emph{ordering}: given the candidates that survive, decide which
observation to request next.  Under the uniform posterior the expected
one-step reduction in $\log_2|V_t|$ from an observation is the Shannon entropy
of that observation's outcome over the survivors, so requesting the
maximum-entropy available observation is the greedy policy for expected
information gain.  This criterion is classical Bayesian experimental
design~\cite{lindley1956,chaloner1995}; what the applications below add is that
the posterior it needs is available exactly, in closed form, rather than by
estimation.

\subsection{Two Applications}
\label{sec:twoapps}

The paper develops two instances of this method that are usually studied
independently.

The first is \textbf{probabilistic combinational equivalence checking (CEC)}.
Here the hidden object is a Boolean function, the candidates are all functions
of $n$ variables, and the observations are modified-Haar spectral coefficients:
two circuits are compared by computing a few coefficients of each, and
agreement on those coefficients is evidence, never proof, of equivalence.  The
strength of that evidence is precisely the number of functions still consistent
with the coefficients observed, the size of the version space.  A method
proposed in 2002 by Thornton, Drechsler and G\"unther posed this counting
problem and solved it only for a single coefficient and for coefficients on
disjoint supports, leaving the general case as an enumeration that grew
exponentially in the number of coefficients~\cite{thornton2002vlsi}.
Section~\ref{sec:cec} closes the general case.

The second is \textbf{key counting for logic-locked netlists}.  Here the hidden
object is a secret key, the candidates are all $2^k$ key values, and the
observations are input--output pairs read from an unlocked oracle chip: each
query rules out the keys inconsistent with the observed response.  The
surviving key set is the version space, and its logarithm is directly
comparable to the advertised key length.  Section~\ref{sec:lockapp} shows that
the same counting recursion, run over the gate-level factor graph of the
netlist, computes this quantity exactly, and reports it across the published
Trust-Hub obfuscation benchmarks.

The two applications share more than a vocabulary.  In both, the observations
are linear or gate-level constraints whose joint structure, exploited
correctly, makes an apparently intractable count tractable; in both, the exact
posterior turns query selection from a heuristic into a computation; and in
both, the naive independence approximation that a practitioner reaches for when
an exact count seems unavailable is wrong in a direction that inflates
confidence.  The CEC application is presented first because it is where the
machinery originated and where every quantity has a closed form against which
the method can be checked exactly.  Since combinational equivalence checking
has since been settled in practice by satisfiability sweeping and by computer
algebra, the CEC results here are offered not as a competitive checker but as
the completion of the counting problem that the 2002 method left open, and as
the exactly-solvable setting that validates the machinery the second
application then carries into hardware security.

Section~\ref{sec:haar} sets up the modified-Haar transform on switching
circuits and its lineage.  Section~\ref{sec:cec} is the first application, the
CEC counting problem and its solution.  Section~\ref{sec:lockapp} is the
second application, key counting for locked netlists, with the experimental
program on the published benchmarks.  Section~\ref{sec:disc} discusses the
limits of both.

\section{The Haar Transform Applied to Switching Circuits}
\label{sec:haar}

Both applications in this paper are counting problems over constraints that are
rows of a spectral transform matrix.  In the equivalence-checking application
the constraints are literally modified-Haar coefficients; in the locking
application the gate-level constraints are handled by the same block-structured
recursion that the Haar system makes possible.  This section fixes the
transform, its notation, and the lineage of its use in switching theory, so
that the counting sections can refer to it directly.

\subsection{Spectral Transforms of Switching Functions}

A completely specified switching function $f:\{0,1\}^n\rightarrow\{0,1\}$ is
represented for spectral purposes by its $\{+1,-1\}$-encoded truth-table
vector $\hat s$ of length $N=2^n$, with logic~$0$ mapped to $+1$ and logic~$1$
to $-1$.  A spectral transform is a fixed integer matrix $\mathbf{T}$ of order
$N$; the spectrum of $f$ is $\mathbf{T}\hat s$, and each row of $\mathbf{T}$
defines one spectral coefficient as an integer linear functional of the truth
table.  The Walsh--Hadamard transform, whose rows are the characters of
$\mathbb{Z}_2^n$, is the most widely used such transform in switching
theory~\cite{karpovsky1976,hurst1985,karpovsky1985}, and the Reed--Muller
transform plays the analogous role over $\mathrm{GF}(2)$.  The Haar transform
is distinguished among them by being \emph{local}: most of its rows are
supported on a single dyadic block of the truth table rather than on all of
it, so a Haar coefficient reports on a localized region of the function.  This
locality is what makes the counting problem of Section~\ref{sec:cec}
tractable, and it is the reason the Haar transform, rather than Walsh or
Reed--Muller, is the natural setting here.

\subsection{The Modified Haar Transform}

We use the \emph{modified} Haar transform, in which every nonzero entry is
$+1$ or $-1$ and the rows are indexed by the nodes of a binary
\emph{block tree} over the truth table~\cite{thornton1997icics,thornton1995tcad}.
The root row $H_0$ is the all-ones vector, the global signed sum of the truth
table.  Each remaining row $H(j,c)$ is attached to a node of the tree at depth
$j$ and dyadic position $c$: it is $+1$ on the left half of that node's block,
$-1$ on the right half, and $0$ elsewhere.  A node at depth $j$ has a block of
size $N/2^{j}$, so rows deeper in the tree are supported on smaller blocks and
report on finer detail.  Writing the row as an operator on Shannon cofactors,
$H(j,c)$ is the signed sum of the cofactor of $f$ selected by the path from the
root to node $(j,c)$, which is why the modified Haar spectrum is naturally read
as a tree of cofactor sums rather than as a flat vector.

For $n=3$ the modified Haar transformation matrix, with each row labeled by the
cofactor functional it computes, is
\begin{equation}
\label{eq:haar3}
\mathbf{T}_3=
\begin{array}{r@{\;}c}
f & \begin{bmatrix} 1&1&1&1&1&1&1&1 \end{bmatrix}\\[1pt]
x_1 & \begin{bmatrix} 1&1&1&1&\bar 1&\bar 1&\bar 1&\bar 1 \end{bmatrix}\\[1pt]
x_2 f_{\bar x_1} & \begin{bmatrix} 1&1&\bar 1&\bar 1&0&0&0&0 \end{bmatrix}\\[1pt]
x_2 f_{x_1} & \begin{bmatrix} 0&0&0&0&1&1&\bar 1&\bar 1 \end{bmatrix}\\[1pt]
x_3 f_{\bar x_1\bar x_2} & \begin{bmatrix} 1&\bar 1&0&0&0&0&0&0 \end{bmatrix}\\[1pt]
x_3 f_{\bar x_1 x_2} & \begin{bmatrix} 0&0&1&\bar 1&0&0&0&0 \end{bmatrix}\\[1pt]
x_3 f_{x_1\bar x_2} & \begin{bmatrix} 0&0&0&0&1&\bar 1&0&0 \end{bmatrix}\\[1pt]
x_3 f_{x_1 x_2} & \begin{bmatrix} 0&0&0&0&0&0&1&\bar 1 \end{bmatrix}
\end{array}
\end{equation}
where $\bar 1$ denotes $-1$.  The eight rows are the root sum, one half-block
difference at depth one, two quarter-block differences at depth two, and four
single-cell differences at depth three; the labels record that the higher-order
rows read progressively deeper Shannon cofactors of $f$.  The $0$ entries mark
the cells outside a row's block and are what make the transform local.  A row
at depth $j$ is supported on $2^{\,n-j}$ cells, so the maximum magnitude a
coefficient can attain decreases with depth, and the whole matrix is generated
by one recursive rule: split each block into its two dyadic halves and attach a
$(+1,-1)$ difference row to the node.

\subsection{Lineage and Past Applications}

Spectral methods for switching functions were placed on a systematic footing by
Karpovsky~\cite{karpovsky1976,karpovsky1985} and by Hurst, Miller and
Muzio~\cite{hurst1985}, who developed the Walsh, Reed--Muller and Haar spectra
as tools for synthesis, testing and classification.  Hurst identified the Haar
transform specifically as a vehicle for digital network
synthesis~\cite{hurst1981}, and Stankovi\'c and Falkowski later surveyed the
status of the Haar wavelet transform in this setting~\cite{stankovic2003}.  The
practical obstacle throughout was cost: a direct transform is an
$N\times N$ matrix--vector product, prohibitive for functions of many
variables.

Two developments removed the obstacle for the coefficients actually needed.
The first was the observation that the Haar spectrum can be computed directly
from a decision-diagram representation of the function without ever forming the
truth table: Falkowski and Chang gave forward and inverse transformations
between the Haar spectrum and binary decision
diagrams~\cite{falkowski1994}, Hansen and Sekine gave decision-diagram
techniques for the Haar wavelet transform~\cite{hansen1997}, and multi-terminal
and related diagrams~\cite{clarke1997mtbdd} supplied the arithmetic
substrate.  The second was the reformulation of individual spectral
coefficients as circuit \emph{output probabilities}.  A modified-Haar
coefficient equals, up to scale and offset, the probability that a cofactor of
$f$ evaluates to a given logic value, so it can be read off a decision diagram
by a single depth-first traversal that accumulates path probabilities in
$O(\text{diagram size})$ time~\cite{thornton1997icics,thornton1995tcad}.  This
output-probability route to the spectrum, in which each coefficient is obtained
from an inexpensive probability computation over a partial diagram rather than
from a matrix product, is the computational basis for the equivalence-checking
method summarized in the next section, and the general treatment of spectral
techniques in VLSI CAD is collected in~\cite{thornton2001book}.

The point to carry forward is structural.  Because each modified-Haar row is a
signed difference on a dyadic block, the coefficients are organized by the
block tree, and conditioning on a coefficient constrains the signed sum of one
block.  Section~\ref{sec:cec} shows that this block-tree organization is
exactly what converts the apparently intractable problem of counting the
functions consistent with a set of observed coefficients into a local,
tree-structured recursion.

\section{Application I: Combinational Equivalence Checking}
\label{sec:cec}

The first application is the one in which the version-space machinery
originated, and in which every quantity has a closed form.  The candidates are
Boolean functions of $n$ variables, the observations are modified-Haar
coefficients, and the version space is the set of functions consistent with the
coefficients observed.

\subsection{The Previous CEC Method}
\label{sec:cec2002}

Probabilistic combinational equivalence checking compares two circuits without
proving them equal.  A few modified-Haar coefficients of each are computed from
its decision diagram by the output-probability route of
Section~\ref{sec:haar}, and if the coefficients agree the circuits are declared
equivalent with a quantified confidence.  Thornton, Drechsler and G\"unther
introduced this method in 2002 and identified the right measure of
confidence~\cite{thornton2002vlsi}: if $\kc(C)$ is the number of Boolean
functions whose coefficients agree with the observed set $C$, then a random
function consistent with $C$ is the intended one with probability $1/\kc(C)$,
so $\kc(C)$, the size of the version space, is exactly the strength of the
evidence.  Each additional matching coefficient shrinks $\kc$ and sharpens the
conclusion; a single mismatch proves inequivalence outright.

The 2002 work computed $\kc(C)$ in only two cases.  For a single coefficient it
gave a binomial closed form, and for coefficients on disjoint supports it
multiplied the individual counts under an independence argument.  The general
case, in which the conditioned coefficients share supports and are therefore
statistically dependent, was left as an enumeration over coefficient-value
grids that grew exponentially in the number of coefficients (Eq.~(41) there),
and the dependence was attacked in coefficient space, where it has no product
structure.  That is the gap this section closes.  The remainder of the section
develops the exact general recursion, the closed forms that the two 2002 cases
turn out to be extremes of, the exact error of the independence approximation
the 2002 method fell back on, and machine verification against the tables of
the 2002 paper.

\subsection{The Counting Problem, Reparameterized}
\label{sec:count}

Let $C$ denote a set of conditioned coefficients with observed values,
formally a partial map from $\{H_0\}\cup\{H(j,c)\}$ into $\mathbb{Z}$,
and define
\[
\kc(C) \;=\; \#\bigl\{\, g:\F^n\rightarrow\F \;\bigm|\;
g \vDash C \,\bigr\},
\]
where $g \vDash C$ abbreviates the statement that every coefficient of $g$
named in $C$ equals its observed value.  The governing expression of the 2002
method is
\begin{equation}
\Pr\bigl[\, f\equiv g \bigm| \text{the coefficients in } C
\text{ match} \,\bigr] \;=\; \frac{1}{\kc(C)},
\label{eq:posterior}
\end{equation}
so everything reduces to computing $\kc$.  The difficulty in 2002 was that
nested coefficients are statistically dependent, and the dependence was
attacked in coefficient space, where it has no product structure.  The way
out is to stop conditioning on coefficients and to start conditioning on the
block sums they constrain.

\begin{lemma}[Local bijection]
\label{lem:bij}
At every internal node of the block tree, the map
$(S,H)\mapsto(S_L,S_R)$ with $S_L=\tfrac{S+H}{2}$,
$S_R=\tfrac{S-H}{2}$ is a bijection between parent-sum/coefficient pairs
of equal parity and child-sum pairs, with inverse $S=S_L+S_R$,
$H=S_L-S_R$.  Consequently the full modified Haar spectrum determines
all block sums and conversely, and the transform is invertible.
\end{lemma}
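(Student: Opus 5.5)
The plan is to verify the local statement directly and then propagate it up and down the block tree by induction on depth. First I fix notation. For a node $v$ with block $B_v$, let $S_v=\sum_{i\in B_v}\hat s_i$ be its signed block sum. Let $L,R$ be the children of $v$, whose blocks are the dyadic halves of $B_v$. By the definition of $H(j,c)$ (equal to $+1$ on the left half, $-1$ on the right half, $0$ elsewhere), the coefficient at $v$ is $H_v=S_L-S_R$, and trivially $S_v=S_L+S_R$. The root sum $S_{\mathrm{root}}$ is exactly $H_0$, and the leaf sums are the individual entries $\hat s_i\in\{\pm1\}$.

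\textbf{Local bijection.} The linear map $(S_L,S_R)\mapsto(S_L+S_R,\,S_L-S_R)$ on $\mathbb{Z}^2$ has determinant $-2$. It is therefore injective, and its image is exactly the set of pairs $(S,H)$ with $S\equiv H \pmod 2$, since $S+H=2S_L$ is even. Conversely, for any $(S,H)$ of equal parity, $\tfrac{S+H}{2}$ and $\tfrac{S-H}{2}$ are integers, and substituting them back returns $(S,H)$. So the two formulas are mutually inverse between the equal-parity pairs and $\mathbb{Z}^2$. The equal-parity condition is the only point that needs care, and it is automatic for actual spectra: each $\hat s_i$ is odd, so $S_v\equiv |B_v| \pmod 2$. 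Any two sums over blocks of the same size therefore share a parity, and $H_v=S_L-S_R$ always has the parity of $S_v=S_L+S_R$.

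\textbf{Global invertibility.} The argument runs by induction on depth in both directions.

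\emph{Spectrum to block sums.} Suppose the full spectrum is given. The root sum is $H_0$. If $S_v$ is known at depth $j$, then together with $H_v$ the local inverse gives $S_L,S_R$ at depth $j+1$. At depth $n$ this yields every leaf sum, i.e.\ every entry of $\hat s$, so the truth table is recovered.

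\emph{Block sums to spectrum.} Conversely, the block sums determine $H_0=S_{\mathrm{root}}$ and each $H_v=S_L-S_R$.

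Since the map $\hat s\mapsto \mathbf{T}\hat s$ admits this left inverse on all of $\mathbb{Z}^N$, $\mathbf{T}$ is injective. As a square matrix it is therefore nonsingular, and the transform is invertible. For rigor one can also note that the rows are mutually orthogonal: nested blocks give a row that is constant on the smaller block against one of zero sum there, and disjoint blocks have disjoint support. Hence $\mathbf{T}\mathbf{T}^{\top}$ is diagonal with positive entries, which gives an independent check.

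I expect no real obstacle. The one subtle step is stating the domain precisely: the bijection is onto $\mathbb{Z}^2$ only when restricted to equal-parity pairs. One must also make sure the inductive recovery never leaves that domain, which is handled by the parity observation above. Without that observation the recovered child sums could be half-integers and the top-down reconstruction would not be well defined.
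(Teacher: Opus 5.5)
Your proof is correct and follows the argument the paper leaves implicit, since it states the lemma with no separate proof: invert the $2\times 2$ map on equal-parity pairs, then split block sums top-down from $H_0$. The one step worth an extra clause is going from injectivity on $\mathbb{Z}^N$ to nonsingularity of $\mathbf{T}$. It holds because the reconstruction $S_L=\tfrac{S+H}{2}$, $S_R=\tfrac{S-H}{2}$ is $\mathbb{Q}$-linear and so is a left inverse on $\mathbb{Q}^N$; your orthogonality remark also settles it directly.
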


In plain terms, a Haar coefficient at a node is an instruction to split that
node's total in a particular way between its two children.  Viewing the
spectrum as a tree of splits rather than as a vector of values converts the
dependence structure that obstructed the 2002 analysis into locality, and
locality is what probability theory rewards.

\begin{lemma}[Free blocks]
\label{lem:free}
Under the uniform prior, a block of size $m$ containing no conditioned
coefficient in its subtree admits exactly
$\binom{m}{(m+S)/2}$
fillings with block sum $S$ (zero unless $|S|\le m$ and
$S\equiv m \bmod 2$).
\end{lemma}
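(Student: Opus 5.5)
The plan is to reduce the count to choosing which cells of the block hold logic~$1$. A block of size $m$ is a set of $m$ truth-table cells, and a filling assigns each cell a value in $\{+1,-1\}$ under the encoding of Section~\ref{sec:haar}. Because no conditioned coefficient lies in the block's subtree, the only constraint acting inside the block is the one on its total, which its parent imposes. So the fillings to be counted are exactly the vectors in $\{+1,-1\}^m$ with entry sum $S$, with no further restriction from finer coefficients.

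The first step is to fix this reading of ``free'': no row $H(j,c)$ whose node lies in the subtree is in the domain of $C$. Rows at nodes outside the subtree either have support disjoint from the block or contain the whole block in one half of their support. In the second case the block enters that coefficient only through $S$. I will check this against the dyadic structure, using the fact that every row is $+1$ on one half of its node's block and $-1$ on the other. If an ancestor's row splits at a higher node, the block sits entirely in one half, so its contribution is $\pm S$.

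The second step is the count itself. Suppose a filling has $p$ entries equal to $+1$ and $m-p$ entries equal to $-1$. Its sum is then $p-(m-p)=2p-m$, so $S=2p-m$ forces $p=(m+S)/2$. This $p$ is a nonnegative integer at most $m$ exactly when $|S|\le m$ and $S\equiv m \pmod 2$; otherwise no filling exists and the count is zero. When the conditions hold, a filling is determined by choosing which $p$ cells carry $+1$, which gives $\binom{m}{(m+S)/2}$. The uniform prior plays no role beyond making every filling count with weight one.

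The only step needing care is the first one: showing that the block's internal configuration affects the constraints in $C$ only through $S$. Lemma~\ref{lem:bij} gives this cleanly. Applied recursively down the subtree, it puts fillings in bijection with assignments of all internal block sums and coefficients. Since none of those internal coefficients is conditioned, the parity-consistent choices are unconstrained, and the count is just the number of $\pm1$ vectors with sum $S$.
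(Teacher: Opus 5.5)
Your proof is correct and is the standard binomial count the paper treats as immediate (it states the lemma without proof): a $\pm1$ filling of $m$ cells with sum $S$ has exactly $(m+S)/2$ entries equal to $+1$, and the range and parity conditions come out exactly as you derive them. Your first step, showing that ancestor rows are constant on the block, is not needed for the lemma itself---it is what justifies using the lemma inside the recursion of Theorem~\ref{thm:dp}---and your opening phrase about choosing the logic-$1$ cells (the $-1$ entries) is harmless, since $\binom{m}{p}=\binom{m}{m-p}$.
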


\subsection{The General Recursion}
\label{sec:main}

The first theorem computes $\kc(C)$ for an arbitrary conditioned set, with no
independence or structure assumptions, in time polynomial in the truth-table
size.  The intuition is one sentence.  Walk the tree from the leaves up,
carrying at each block the exact census of fillings by block sum,
multiplying censuses where a conditioned coefficient forces the split and
convolving them where it does not.

\begin{theorem}[Tree recursion for arbitrary coefficient sets]
\label{thm:dp}
For any conditioned set $C$, define for each block $B(j,c)$ the vector
$g_{(j,c)}(S)$ = the number of fillings of the block with sum $S$
consistent with all conditioned coefficients inside its subtree.  Writing
$g_L = g_{(j+1,c0)}$ and $g_R = g_{(j+1,c1)}$ for the two child censuses,
\[
g_{(j,c)}(S) =
\begin{cases}
[\,S=\pm1\,] & j=n{+}1,\\[3pt]
g_L\!\bigl(\tfrac{S+h}{2}\bigr)\,g_R\!\bigl(\tfrac{S-h}{2}\bigr)
& H(j,c)=h \in C,\\[3pt]
\displaystyle\sum_{S_L+S_R=S} g_L(S_L)\, g_R(S_R)
& H(j,c)\notin C,
\end{cases}
\]
with the middle case zero on any parity failure, and
$\kc(C) = g_{(1,())}(h_0)$ when $H_0=h_0\in C$, else
$\sum_S g_{(1,())}(S)$.  Blocks disjoint from the conditioned set
contribute Lemma~\ref{lem:free} in closed form, so the work is confined
to the union of root paths of conditioned nodes.  Performing each
unconstrained step entry by entry costs $O(4^n)=O(N^2)$ integer
operations.  Section~\ref{sec:cost} improves this to $O(N\log^2 N)$.
\end{theorem}

\subsection{Recursion Details}
\label{sec:whatitis}

Theorem~\ref{thm:dp} is the sum--product algorithm over the counting
semiring, on a tree-structured factor graph.  Take as variables the
truth-table entries $s_i\in\{\pm1\}$ together with a partial sum $S(v)$ at
each node $v$ of the block tree, and as factors
$\delta[\,S(v)=S(v_L)+S(v_R)\,]$ at every internal node and
$\delta[\,S(v_L)-S(v_R)=h_v\,]$ at every conditioned node.  Every factor has
scope $\{S(v),S(v_L),S(v_R)\}$, so the factor graph \emph{is} the block tree
and is cycle-free by construction.  The forced split of the middle case and
the convolution of the third are the two message operations.  No new
inference machinery is introduced here.  The generalized distributive
law~\cite{ajimceliece2000} and the factor-graph
formulation~\cite{kschischang2001} cover it, and the same
census-by-residual-margin recursion has been in production statistical
software since the network algorithm for Fisher's exact
test~\cite{mehtapatel1983}.  Recent work in side-channel analysis makes the
same move of replacing an approximate belief propagation with exact inference
on a compiled tractable representation~\cite{wedenig2024}.

The contribution is therefore not the recursion.  It is the observation that
the Haar block-sum system satisfies the conditions under which the recursion
is cheap, together with what that exactness then buys.  The conditions are
isolated below because only the first is usually named.

\begin{proposition}[Scope]
\label{prop:scope}
Let $L_1,\dots,L_q$ be integer linear functionals on $\{\pm1\}^N$ and let
$\sigma_v$ denote the vector of pending partial sums at node $v$ of the
tree of constraint supports.  Sum--product over that tree computes the
exact solution count in $O\!\left(\sum_v |\mathrm{range}(\sigma_v)|^2\right)$
operations.  Three conditions make the state space small:
\begin{enumerate}
\item[(C1)] \emph{Laminar supports.} Every pair of supports is nested or
disjoint, so the tree exists.
\item[(C2)] \emph{Constancy on descendants.} For every $L_j$ and every
laminar node $v\subsetneq\mathrm{supp}(L_j)$, the coefficient vector of
$L_j$ is constant on $v$.
\item[(C3)] \emph{Bounded coefficients.} Coefficients are $\pm1$.
\end{enumerate}
Under (C1)--(C3) the state at $v$ is one integer of range $|v|+1$.
\end{proposition}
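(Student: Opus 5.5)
The plan is to instantiate the sum--product argument of Theorem~\ref{thm:dp} on a general laminar family and then bound the state space using (C1)--(C3).

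\textbf{Setting up the tree.} By (C1) the supports $\mathrm{supp}(L_1),\dots,\mathrm{supp}(L_q)$, together with the singletons $\{i\}$ and the whole index set, form a laminar family. Ordered by inclusion, such a family is a rooted tree whose leaves are the cells. At each node $v$ define the message $g_v(\sigma)$: the number of fillings of the cells in $v$ such that
\begin{itemize}
\item every constraint whose support lies inside $v$ is satisfied, and
\item the vector of contributions to the still-pending functionals (those whose support strictly contains $v$) equals $\sigma$.
\end{itemize}
Correctness then follows as in Theorem~\ref{thm:dp}. Disjoint children have independent fillings, and each pending functional is additive over them. So a parent's message is the convolution of its children's messages, restricted by a delta factor for any constraint whose support is exactly that parent. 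The factor graph is the laminar tree, so it is cycle-free, and the generalized distributive law~\cite{ajimceliece2000} gives the exact count at the root.

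\textbf{Cost bound.} Merging two children $v_L,v_R$ costs at most $|\mathrm{range}(\sigma_{v_L})|\cdot|\mathrm{range}(\sigma_{v_R})|$ operations. This is at most $\max(\cdot)^2$, and it is charged to a child. Nodes with many children are binarized by grouping children successively; the intermediate groups are again disjoint unions and therefore carry their own ranges. Summing over nodes gives the stated $O(\sum_v|\mathrm{range}(\sigma_v)|^2)$.

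\textbf{State collapse under (C2), (C3).} This is the key step. Take a node $v$ and any pending $L_j$ with $v\subsetneq\mathrm{supp}(L_j)$. By (C2) the coefficients of $L_j$ on $v$ are constant, say $a_{j,v}$. Hence the contribution of $v$ to $L_j$ is $a_{j,v}\sum_{i\in v}s_i = a_{j,v}S(v)$. Every component of $\sigma_v$ is therefore a fixed multiple of the single block sum $S(v)$, and the message can be indexed by $S(v)$ alone. The state is then one integer. Because $s_i=\pm1$, it takes values in $\{-|v|,-|v|+2,\dots,|v|\}$, so its range is at most $|v|+1$.

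The role of (C3) is that coefficients $\pm1$ keep the combined contributions at the parent within this same range. A general coefficient $a$ would spread the values, and non-constant ones would break the reduction. A parent with a $\pm1$ signed split (as in Lemma~\ref{lem:bij}) recovers its own $S(\cdot)$ from the children's sums. More generally, at the support node of $L_j$ its value is $\sum_{\text{children }u}a_{j,u}S(u)$, which is computable from the child states, so the constraint becomes a delta on the children's states.

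\textbf{Expected obstacle.} The main obstacle is making this last point rigorous. A constraint must be evaluable from child states alone, and that holds only when $L_j$ is constant on each child of its own support node; (C2) must be read as covering those children. I would state this reading explicitly. I would also check that the Haar rows satisfy it: they are constant on each half, so the bound specializes to the $O(N^2)$ of Theorem~\ref{thm:dp}.
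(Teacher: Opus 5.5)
This is essentially the paper's argument: the load-bearing step is your state collapse, in which (C2) makes every pending functional constant on $v$, so with (C3) each pending sum is $\pm\sum_{i\in v}s_i$ and the state is the single block sum of range $|v|+1$. That is exactly the justification the paper gives after the proposition, and your correctness and cost bookkeeping is the standard sum--product argument the paper cites rather than proves. The obstacle you flag is already covered by (C2) as written, since the children of $\mathrm{supp}(L_j)$ are laminar nodes strictly inside it; the step that deserves more care is your binarization, whose intermediate groups are not laminar nodes and carry a single integer only when children are merged within each sign class of the functional at that node (the dyadic halves, for Haar), but a second partial sum, kept polynomial in range by (C3), when merged in arbitrary order.
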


Condition (C2) is the load-bearing one, and it is why the Haar system in
particular is cheap.  A node $v$ at depth $d$ has up to $d$ conditioned
ancestors, so $\sigma_v$ is a priori $d$-dimensional and the state space is
$N^{O(d)}$.  It collapses to one dimension because $v$ lies entirely inside a
single dyadic half of each of its ancestors, so each ancestor's coefficient
vector is constant on $v$ and every pending sum is $\pm\sum_{i\in v}s_i$, the
same number up to sign.  Replace the Haar sign pattern at each node by an
arbitrary $\pm1$ pattern, keeping the dyadic supports and hence (C1), and the
collapse fails.

None of the three conditions is implied by the others, and (C1) alone has no
computational content.  Let the laminar family be the $N$ singletons together
with $[N]$, and impose the single constraint $\sum_i a_i x_i = h$ with
$x\in\{0,1\}^N$.  Encoding a bipartite perfect-matching instance into the
coefficients in a base larger than any achievable row value makes the
solution count equal to the permanent of a 0/1 matrix, with coefficients of
$O(N\log N)$ bits.  Counting is therefore
$\#\mathrm{P}$-complete~\cite{valiant1979}.  A second family, $2r+1$
constraints of common support with coefficients in $\{1,2\}$, gives the same
count with bounded coefficients, so (C3) does not rescue (C1) either.  Both
were verified against brute force (Section~\ref{sec:verify}).

It is equally worth recording what does \emph{not} explain the tractability.
The root coefficient $H_0$ has support $[N]$, so the primal graph of the
constraint hypergraph is a clique on all $N$ truth-table entries and the
system has treewidth $N-1$.  The tree exploited here is not a tree
decomposition of that hypergraph.  It is a junction tree over an extended
formulation carrying the auxiliary variables $S(v)$.  Bounded treewidth is
also not necessary in general.  Sliding-window sums of width $w$ are
maximally non-laminar, no two supports being nested or disjoint, and are
counted in $O(N2^{w-1})$.  The two conditions are incomparable, and the
dichotomy theorems for counting constraint satisfaction, which do give a
bounded-treewidth characterization~\cite{dalmau2004}, assume bounded arity
and explicitly tabulated relations.  Neither hypothesis holds for a
succinctly described functional of arity $N$.

\subsection{Cost}
\label{sec:cost}

The unconstrained step is a convolution of two child censuses of length
$m/2+1$ at a node of block size $m$.  Entry by entry that is $O(m^2)$, and
since level $m$ holds $N/m$ nodes the total is $O(N^2)=O(4^n)$.
Sub-quadratic convolution reduces this to $O(N\log^2 N)$ arithmetic
operations on integers of at most $N$ bits, or $\tilde O(N^2)$ bit operations
in place of $\tilde O(N^3)$.  Conditioned nodes are cheaper still, the split
being forced.

The implementation realizes this by packing each census into a single integer
with a field width wide enough that no product coefficient carries,
multiplying once, and unpacking.  The arithmetic is exact throughout.
Measured against the entry-by-entry path on the unconstrained census, and
agreeing with it in every digit, the speedup grows with $n$ as the analysis
requires.  It is $1.70\times$ at $n=8$, $2.28\times$ at $n=12$ and
$3.55\times$ at $n=13$, where the naive path takes 176~s against 49~s.  The
measured naive time at $n=13$ exceeds what a pure $O(N^2)$ arithmetic model
predicts, which is the bit-complexity term appearing where the analysis
places it.

The point to take from Theorem~\ref{thm:dp} is not the formula and not the
algorithm.  It is the complexity class relative to what preceded it.  The
2002 update (Eq.~(41) there) enumerates coefficient-value grids and grows
exponentially in $|C|$, while the recursion above is polynomial in $N$ and
does not depend on $|C|$.  That independence is not itself a result, since
further constraints only prune states.  What matters is the disappearance of
the grid enumeration.  The problem resisted a general solution because the
dependence between nested coefficients has no product structure in
coefficient space.  In block-sum space it is a forced split at one tree node,
and is therefore local.

\subsection{Closed Forms}

\begin{theorem}[Vandermonde collapse, single coefficient]
\label{thm:single}
For a single conditioned coefficient with block size $m$ and value $h$,
$\kc = 2^{\,N-m}\binom{m}{(m+h)/2}$.  For $H_0=h$,
$\kc=\binom{N}{(N+h)/2}$.
\end{theorem}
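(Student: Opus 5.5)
I would prove this by running the recursion of Theorem~\ref{thm:dp} on a conditioned set containing exactly one node, and then collapsing the result with Vandermonde's identity, as the theorem's name suggests.

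\textbf{Root case.} Take $C=\{H_0=h\}$ first. No internal node is conditioned, so every block is free. By Lemma~\ref{lem:free} the root census is $g_{(1,())}(S)=\binom{N}{(N+S)/2}$, and Theorem~\ref{thm:dp} gives $\kc=g_{(1,())}(h)=\binom{N}{(N+h)/2}$. Read directly, this counts the $\{\pm1\}$ vectors of length $N$ with exactly $(N+h)/2$ entries equal to $+1$.

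\textbf{Internal node.} Now let the single conditioned coefficient be $H(j,c)=h$ at a node whose block has size $m$. Its two children are free blocks of size $m/2$. By the middle case of Theorem~\ref{thm:dp} and Lemma~\ref{lem:free},
\[
g_{(j,c)}(S)=\binom{m/2}{(m/2+(S+h)/2)/2}\binom{m/2}{(m/2+(S-h)/2)/2}.
\]
Every other block on the root path is combined by convolution with a free sibling census. Since no root coefficient is conditioned, $\kc=\sum_S g_{(1,())}(S)$.

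\textbf{Avoiding the convolutions.} Rather than push the convolutions up the path, I would use the fact that summing a convolution over all $S$ factorizes. The total over all $S$ of a convolution equals the product of the children's totals. By induction up the root path, $\sum_S g_{(1,())}(S)$ therefore equals $\sum_S g_{(j,c)}(S)$ times the total mass of the free sibling blocks. That mass is $\sum_S\binom{m'}{(m'+S)/2}=2^{m'}$ for each sibling of size $m'$. The siblings together with $B(j,c)$ partition the truth table, so the factor is $2^{N-m}$.

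\textbf{Main step: the node-level sum.} It remains to show $\sum_S g_{(j,c)}(S)=\binom{m}{(m+h)/2}$. I expect this to be the one place needing care. Substitute $a=(m/2+S_L)/2$ and $b=(m/2+S_R)/2$, the counts of $+1$ entries in the left and right halves. The constraint $S_L-S_R=h$ becomes $a-b=h/2$, and the sum becomes $\sum_a\binom{m/2}{a}\binom{m/2}{a-h/2}$. Rewrite the second factor as $\binom{m/2}{m/2-a+h/2}$. Vandermonde's identity then gives $\binom{m}{m/2+h/2}=\binom{m}{(m+h)/2}$.

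\textbf{Bookkeeping.} Two points must be checked. First, the summation range over $S$ corresponds bijectively to $a$ via Lemma~\ref{lem:bij}. Second, the parity requirement matches: the result is zero when $h\not\equiv m\pmod 2$, or more precisely when $h/2$ is not an integer of the right parity, and the binomial vanishes in the same cases. Alternatively, one can bypass the recursion entirely by a direct combinatorial argument. Flip the signs on the right half of the block; this is a bijection under which $H(j,c)$ becomes the plain block sum, so the single-coefficient case reduces to the root case on a block of size $m$. The remaining $N-m$ cells are unconstrained and contribute $2^{N-m}$. I would present this as a one-line sanity check after the recursion argument.
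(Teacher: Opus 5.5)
Your proposal is correct and follows the route the theorem's name points to. You run Theorem~\ref{thm:dp} with a single conditioned node, let the free siblings along the root path contribute $2^{N-m}$ through the factorization of convolution totals, and collapse the forced-split sum at the node with Vandermonde's identity; the sign-flip bijection you mention is a valid shortcut to the same count. One small correction: for $m\ge 2$ the only parity condition is that $h$ be even, since every integer $h/2=a-b$ with $|h|\le m$ occurs, so $h/2$ carries no further ``right parity'' requirement.
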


\begin{corollary}
Theorem~\ref{thm:single} is Eq.~(24) of~\cite{thornton2002vlsi},
including its worked example $\kc(H_0{=}2)=4$ at $n=2$.
\end{corollary}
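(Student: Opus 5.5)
The plan is to prove the corollary by translation rather than by a fresh count. Theorem~\ref{thm:single} already gives $\kc$ for a single conditioned coefficient, so all that remains is to show that Eq.~(24) of~\cite{thornton2002vlsi} is the same expression once its variables are rewritten in the $\{\pm1\}$ block-sum parameters used here. Then the worked example is evaluated on both sides.

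\textbf{Step one: fix the dictionary between the two papers.} In~\cite{thornton2002vlsi} a coefficient is not written as a signed block difference. It is read as an output-probability or minterm count, through the route described in Section~\ref{sec:haar}. Under the encoding $0\mapsto+1$, $1\mapsto-1$, a block of size $m$ containing $w$ ones has signed sum $m-2w$. Accordingly:
\begin{itemize}
\item $H_0=h$ is equivalent to $w=(N-h)/2$ ones overall.
\item $H(j,c)=h$ is equivalent to the two halves of the block containing $w_L$ and $w_R$ ones, with $w_R-w_L=h/2$.
\end{itemize}
The 2002 formula is stated in this minterm-count or probability form. Substituting it turns $\binom{m}{(m+h)/2}$ into $\binom{m}{w}$, or into $\binom{m}{m-w}$ by the symmetry of binomial coefficients. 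Here $w$ is whatever count Eq.~(24) uses.

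\textbf{Step two: match the free factor.} The factor $2^{N-m}$ counts arbitrary fillings of the cells outside the coefficient's support. This should correspond to the 2002 factor raising $2$ to the number of minterms unconstrained by the coefficient. I would check that their exponent equals $N-m=2^n-2^{n-j}$ for a depth-$j$ row. The root is the case $m=N$, where the factor is $1$.

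\textbf{Step three: verify the worked example.} Take $n=2$, so $N=4$, and condition on $H_0=2$. Theorem~\ref{thm:single} gives
\[
\kc=\binom{4}{3}=4.
\]
Directly, $H_0=2$ means exactly one cell equals $-1$, that is, exactly one minterm is on. There are four such functions, agreeing with the value $4$ reported in~\cite{thornton2002vlsi}. The same instance also follows from Theorem~\ref{thm:dp} by convolving four leaf censuses $[S=\pm1]$ and reading off the entry at $S=2$, which is a useful third confirmation.

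\textbf{Main obstacle.} I expect the difficulty to be notational. The 2002 paper may normalize coefficients differently, for example by a scale factor or by using $\{0,1\}$ rather than $\{\pm1\}$ values. It may also index the binomial by zeros rather than ones. I would first pin down its scale and offset from the worked example itself, which forces the normalization. I would then confirm that the resulting identity holds for every depth $j$ and every admissible $h$, using the parity condition of Lemma~\ref{lem:free}.
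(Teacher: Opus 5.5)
Your proposal is correct and follows essentially the paper's route: rewrite the single-coefficient count $2^{N-m}\binom{m}{(m+h)/2}$ as a minterm count, identify it with the cited binomial formula, and confirm $\kc(H_0{=}2)=\binom{4}{3}=4$, which the paper records in Table~\ref{tab:cec_verify} as agreeing across brute force, recursion and closed form. One refinement: the paper fixes its encoding and ordering conventions by regenerating all sixteen rows of Table~IV of~\cite{thornton2002vlsi} (check (V1) in Section~\ref{sec:verify}), not from the worked example, because that single value cannot force the normalization on its own (since $\binom{4}{1}=\binom{4}{3}$ it cannot detect the sign of the encoding), although this particular ambiguity does not affect the count because of binomial symmetry.
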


\begin{theorem}[The pair $(H_0,H_1)$, closed]
\label{thm:pair}
With $S_L=\tfrac{H_0+H_1}{2}$ and $S_R=\tfrac{H_0-H_1}{2}$,
\[
\kc(H_0,H_1) \;=\; \binom{N/2}{\tfrac{N/2+S_L}{2}}
\binom{N/2}{\tfrac{N/2+S_R}{2}},
\]
zero on any parity or range failure.
\end{theorem}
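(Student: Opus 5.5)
We obtain Theorem~\ref{thm:pair} as a direct specialization of Theorem~\ref{thm:dp}. The conditioned set is $C=\{H_0=h_0,\,H_1=h_1\}$, where $H_1$ denotes the depth-one coefficient $H(1,())$ attached to the root block. Every node strictly below depth one carries no conditioned coefficient. The two child blocks $B(2,0)$ and $B(2,1)$, each of size $N/2$, are therefore free blocks in the sense of Lemma~\ref{lem:free}. Their censuses are thus available in closed form, with no need to run the recursion below them:
\[
g_L(S)=g_R(S)=\binom{N/2}{\tfrac{N/2+S}{2}},
\]
zero unless $|S|\le N/2$ and $S\equiv N/2 \pmod 2$.

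Next we apply the middle (forced-split) case of Theorem~\ref{thm:dp} at the root block. Since $H_1=h_1\in C$, the census there is $g_{(1,())}(S)=g_L\bigl(\tfrac{S+h_1}{2}\bigr)\,g_R\bigl(\tfrac{S-h_1}{2}\bigr)$, zero on parity failure. Because $H_0=h_0\in C$ as well, the theorem gives $\kc(C)=g_{(1,())}(h_0)$. This evaluates to $g_L(S_L)\,g_R(S_R)$ with $S_L=\tfrac{h_0+h_1}{2}$ and $S_R=\tfrac{h_0-h_1}{2}$. Substituting the free-block censuses gives the stated product of binomials.

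The only point needing care is the bookkeeping of the zero cases. A parity failure arises in two ways: either $h_0+h_1$ is odd, so the split in Lemma~\ref{lem:bij} is not integral, or $S_L$ or $S_R$ fails $S\equiv N/2 \pmod 2$. A range failure means $|S_L|>N/2$ or $|S_R|>N/2$. We check that each of these is exactly where either the forced-split case or Lemma~\ref{lem:free} returns zero, which is the meaning of ``zero on any parity or range failure.'' We would also note an independent sanity check. By Lemma~\ref{lem:bij}, fixing $(H_0,H_1)$ is the same as fixing the two half-block sums $(S_L,S_R)$, and the two halves are filled independently. The count is therefore the product of the numbers of $\pm1$ vectors of length $N/2$ with the prescribed sums, which is the same formula.
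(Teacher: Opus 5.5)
Your proof is correct and is essentially the paper's own argument. The paper justifies Theorem~\ref{thm:pair} in one line: $H_0$ and $H_1$ are the sum and difference of the two half-block sums, so conditioning on both forces each half (Lemma~\ref{lem:bij}), and each half is then a free block (Lemma~\ref{lem:free}); that is exactly your closing sanity check, and your main route through the forced-split case of Theorem~\ref{thm:dp} at the root is the same computation, since that case is Lemma~\ref{lem:bij} applied at a single node.
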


Theorem~\ref{thm:pair} gives in two binomial coefficients the joint value
that the 2002 method~\cite{thornton2002vlsi} could reach only through a
unit-step enumeration.  $H_0$ and $H_1$ are the sum and difference of the two
half-space sums.  Conditioning on both forces each half exactly, and each
half is then a free block.

\begin{theorem}[Ancestor-closed sets: product form]
\label{thm:anc}
Suppose $H_0\in C$ and $C$ is ancestor-closed: for every conditioned
$H(j,c)$, every coefficient on the root path above it is also
conditioned.  Then all block sums on the constraint skeleton are forced
by iterating Lemma~\ref{lem:bij}, and
\[
\kc(C) \;=\; \prod_{B\in\mathcal{F}(C)} \binom{|B|}{\tfrac{|B|+S_B}{2}},
\]
where $\mathcal{F}(C)$ is the fringe of maximal blocks containing no
conditioned coefficient and $S_B$ the forced sum of $B$.
\end{theorem}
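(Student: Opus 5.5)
The plan is to derive Theorem~\ref{thm:anc} from the recursion of Theorem~\ref{thm:dp} by induction down the constraint skeleton. Define the skeleton $\mathcal{K}(C)$ as the set of nodes whose coefficient lies in $C$. Because $C$ is ancestor-closed and contains $H_0$, $\mathcal{K}(C)$ is a rooted subtree of the block tree containing the root block $B(1,())$.

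\textbf{Forcing the skeleton sums.} The first step shows that every block sum on $\mathcal{K}(C)$, and on each child of a skeleton node, is forced. The root sum is $S=h_0$. Suppose a node $v\in\mathcal{K}(C)$ has forced sum $S_v$ and conditioned coefficient $h_v$. Lemma~\ref{lem:bij} then gives the child sums uniquely as $S_{v_L}=(S_v+h_v)/2$ and $S_{v_R}=(S_v-h_v)/2$. If $S_v$ and $h_v$ have different parity, no filling exists and $\kc(C)=0$. The product formula also returns $0$ in that case, provided we adopt the convention that a binomial with a non-integer or out-of-range lower index is zero, as in Lemma~\ref{lem:free}. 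Induction on depth now assigns a unique forced sum to every child of a skeleton node. The children of skeleton nodes that are not themselves in $\mathcal{K}(C)$ are exactly the blocks of the fringe $\mathcal{F}(C)$. Each such block contains no conditioned coefficient in its subtree: a conditioned descendant would force, by ancestor-closure, the block itself into $\mathcal{K}(C)$.

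\textbf{Evaluating the recursion.} Next I would evaluate the recursion of Theorem~\ref{thm:dp} bottom-up over the skeleton, again by induction. The claim is that for every skeleton node $v$, $g_v(S_v)$ equals the product of $\binom{|B|}{(|B|+S_B)/2}$ over the fringe blocks $B$ below $v$, with the $S_B$ computed from $S_v$ as above. The base data come from Lemma~\ref{lem:free}: for a fringe block $B$, $g_B(S)=\binom{|B|}{(|B|+S)/2}$. At a skeleton node only the middle case of the recursion applies, so $g_v(S_v)=g_{v_L}(S_{v_L})\,g_{v_R}(S_{v_R})$ with the forced child sums, and the induction closes. Since $H_0\in C$, $\kc(C)=g_{(1,())}(h_0)$, and this is the stated product.

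\textbf{Main obstacle.} The only delicate point is bookkeeping rather than mathematics. First, one must check that the fringe blocks $\mathcal{F}(C)$ partition the truth table. This holds because every leaf cell lies under a unique maximal unconditioned block, or lies in a depth-$n$ skeleton node's split into single cells. Second, one must handle the edge cases consistently. One edge case is a skeleton node at depth $n$, whose children are single cells of size $1$; for these $\binom{1}{(1+S)/2}$ correctly reproduces $[S=\pm1]$. The other is parity or range failure, which the binomial convention makes automatically yield $0$. As a sanity check, specializing $C=\{H_0\}$ or $C=\{H_0,H_1\}$ should recover Theorem~\ref{thm:single} and Theorem~\ref{thm:pair}.
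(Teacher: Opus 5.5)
Your proposal is correct and is essentially the paper's argument: iterate Lemma~\ref{lem:bij} down the skeleton to force the sums, note that the fringe blocks partition the truth table and are free, and multiply their Lemma~\ref{lem:free} counts, which you realize as the middle case of Theorem~\ref{thm:dp}. The one slip is the degenerate case $C=\{H_0\}$. There no difference coefficient is conditioned, so $\mathcal{K}(C)$ is empty rather than containing $B(1,())$, and the fringe is the root block itself rather than the non-skeleton children of skeleton nodes; Lemma~\ref{lem:free} then gives $\kc=\binom{N}{(N+h_0)/2}$ directly. Either treat this case separately, or define the fringe as the maximal blocks with no conditioned coefficient in their subtree.
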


\begin{corollary}[Disjoint supports are independent]
\label{cor:indep}
If $C_A$ and $C_B$ condition coefficients supported in disjoint subtrees
(neither conditioning $H_0$), then
$2^N \kc(C_A\cup C_B) = \kc(C_A)\,\kc(C_B)$, so matching events on disjoint
supports are independent.  This recovers the independence used
throughout the experiments of~\cite{thornton2002vlsi}, including its
$\kc(H_2{=}0, H_3{=}{-}2)=2$ example.
\end{corollary}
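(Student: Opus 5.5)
We prove Corollary~\ref{cor:indep} directly from the recursion of Theorem~\ref{thm:dp}, applied to the root block. Let $u_A$ and $u_B$ be the roots of the disjoint subtrees that carry $C_A$ and $C_B$. First we make the setup precise. Since neither set conditions $H_0$, every conditioned coefficient of $C_A$ lies in the subtree of $u_A$, and likewise for $C_B$ and $u_B$. We may take $u_A$ and $u_B$ to be the \emph{minimal} nodes covering each set. Then $\kc(C_A) = 2^{N-|u_A|}\sum_S g^A_{u_A}(S)$, where $g^A_{u_A}$ is the census of block $u_A$ under $C_A$ alone. The reason is that no constraint of $C_A$ involves cells outside $u_A$, so the complement is filled freely in $2^{N-|u_A|}$ ways. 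The same holds for $B$.

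\textbf{Core step.} This is the step we expect to need the most care. The blocks $u_A$ and $u_B$ are disjoint, and $C_A\cup C_B$ conditions nothing outside the two subtrees (in particular, nothing on the root paths above $u_A$ and $u_B$). Hence the constraints of $C_A\cup C_B$ restrict only the cells of $u_A$ and of $u_B$, each set through its own coefficients. The factor-graph reading of Theorem~\ref{thm:dp} makes this concrete:
\begin{itemize}
\item every factor attached to $C_A$ has scope inside the subtree of $u_A$, and every factor attached to $C_B$ has scope inside the subtree of $u_B$;
\item all remaining internal nodes are unconstrained convolution steps.
\end{itemize}
Summing the root census over all $S$ therefore factors. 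A convolution followed by a total sum equals the product of the total sums: $\sum_S\sum_{S_L+S_R=S}g_L(S_L)g_R(S_R)=(\sum g_L)(\sum g_R)$. Applying this identity inductively up the tree from $u_A$, $u_B$ and the free blocks gives
\[
\kc(C_A\cup C_B)=\Bigl(\textstyle\sum_S g^A_{u_A}(S)\Bigr)\Bigl(\textstyle\sum_S g^B_{u_B}(S)\Bigr)\,2^{N-|u_A|-|u_B|}.
\]
Each free block contributes $\sum_S\binom{m}{(m+S)/2}=2^m$ by Lemma~\ref{lem:free}.

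\textbf{Conclusion.} Multiplying the displayed identity by $2^N$ and comparing with the expressions for $\kc(C_A)$ and $\kc(C_B)$ yields $2^N\kc(C_A\cup C_B)=\kc(C_A)\kc(C_B)$. Dividing by $4^N$ rewrites this as $\Pr[A\wedge B]=\Pr[A]\Pr[B]$ under the uniform prior, which is the independence statement.

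\textbf{Worked example.} For the cited case take $n=2$ and $N=4$. The coefficient $H_2$ is supported on cells $1$–$2$ and $H_3$ on cells $3$–$4$, which are disjoint blocks of size $2$. By Theorem~\ref{thm:single}:
\begin{itemize}
\item $\kc(H_2{=}0)=2^{2}\binom{2}{1}=8$;
\item $\kc(H_3{=}-2)=2^{2}\binom{2}{0}=4$.
\end{itemize}
Hence $\kc=8\cdot4/16=2$, as stated.
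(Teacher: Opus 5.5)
Your proof is correct and follows the argument the paper leaves implicit, the ``skeletons never meet'' factorization. Because $C_A\cup C_B$ conditions neither $H_0$ nor any node on the root paths above $u_A$ and $u_B$, every such node is a convolution step of Theorem~\ref{thm:dp}; the total of a convolution is the product of the children's totals; and the free blocks contribute $2^{N-|u_A|-|u_B|}$ by Lemma~\ref{lem:free}. Your worked check $\kc(H_2{=}0)\,\kc(H_3{=}{-}2)/2^4 = 8\cdot 4/16 = 2$ is also right.
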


\subsection{Error due to an Independence Assumption}
\label{sec:gap}

Corollary~\ref{cor:indep} says that coefficients on disjoint supports are
exactly independent.  Practitioners who need $\kc$ but lack a way to compute
it have taken the further step of assuming independence in general, and the
cost of that step has an exact answer.  Write
\[
\kc_{\mathrm{ind}}(C) \;=\; 2^{N}\prod_{i\in C}\Pr\bigl[\,i\text{-th
coefficient}=h_i\,\bigr]
\]
for the count implied by treating every constraint as independent, and let
$P_m(s)=\binom{m}{(m+s)/2}2^{-m}$ be the signed-sum distribution over $m$
cells, so that a coefficient with block size $m$ has
$\Pr[\,\cdot=h\,]=P_m(h)$.

Each coefficient is a linear functional on the $\pm1$-encoded truth table.
Write $\varepsilon_i\in\{0,+1,-1\}^{N}$ for the vector it reads, namely the
all-ones vector for $H_0$ and, for $H(j,c)$, the vector that is $+1$ on the
left half of the block, $-1$ on the right half, and $0$ elsewhere.  Stack
these as the rows of an integer matrix $M_C$, so that the conditioned values
are $M_C\,\hat s$.

\begin{proposition}[Independence gap]
\label{prop:gap}
Let $t=|C|$ and let $M_C$ be as above.  Under the local central limit
approximation $P_m(s)\approx\sqrt{2/(\pi m)}\,e^{-s^{2}/2m}$, as the block
sizes grow
\[
\frac{\kc(C)}{\kc_{\mathrm{ind}}(C)} \;\longrightarrow\;
\bigl[\,\mathbb{Z}^{t} : M_C(\mathbb{Z}^{N})\,\bigr],
\]
the index of the image lattice of the constraint map.  The index is a power
of two, is computable in polynomial time from the Smith normal form of $M_C$,
and does not depend on $n$.
\end{proposition}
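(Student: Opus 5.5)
I would prove Proposition~\ref{prop:gap} by passing to block-sum coordinates, using Lemma~\ref{lem:bij} and Theorem~\ref{thm:anc}, and then comparing two Gaussian densities on a lattice.

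\emph{Step 1: the exact count as a lattice probability.} Since $\hat s$ is uniform on $\{\pm1\}^N$ under the prior, $\kc(C)=2^N\Pr[M_C\hat s=h]$. The vector $M_C\hat s$ is a sum of $N$ independent bounded vectors $\hat s_i\,M_C e_i$, with mean zero and covariance $\Sigma=M_CM_C^{\top}$.

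\emph{Step 2: the lattice support.} The random vector $M_C\hat s$ lies in the coset $M_C\mathbf{1}+2\,M_C(\mathbb{Z}^N)$. To see this, write $\hat s=\mathbf{1}-2x$ with $x\in\{0,1\}^N$. Its support is not contained in any proper sub-coset. The Haar-specific fact I need is that the rows of $M_C$ are orthogonal. This holds because distinct tree nodes are either nested, with the descendant lying inside one dyadic half of the ancestor so the dot product vanishes, or disjoint; the root row is orthogonal to every difference row. Hence $\Sigma=\mathrm{diag}(m_i)$, where $m_i$ is the block size of the $i$-th coefficient.

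\emph{Step 3: apply the multivariate local CLT.} Take all block sizes to grow, which makes the effective number of independent summands per direction large. The multivariate local limit theorem for lattice walks then gives
\[
\Pr[M_C\hat s=h]\approx \frac{\det(2\Lambda)}{(2\pi)^{t/2}\det(\Sigma)^{1/2}}\,e^{-\frac12 h^{\top}\Sigma^{-1}h},
\]
where $\Lambda=M_C(\mathbb{Z}^N)$ and $\det(2\Lambda)=2^t[\mathbb{Z}^t:\Lambda]$ is the covolume of the support coset.

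For the product of marginals I use the stated approximation $P_m(h)\approx\sqrt{2/(\pi m)}\,e^{-h^2/2m}$. Its factor $2$ is exactly the covolume of the one-dimensional support lattice $2\mathbb{Z}$. Because $\Sigma$ is diagonal, the Gaussian exponents and the $\det\Sigma^{1/2}$ factors agree term by term. The ratio $\kc/\kc_{\mathrm{ind}}$ therefore reduces to $2^t[\mathbb{Z}^t:\Lambda]/2^t=[\mathbb{Z}^t:\Lambda]$.

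\emph{Step 4: the index.} The index is finite because the rows are orthogonal and hence $M_C$ has full row rank $t$. It is computable from the Smith normal form of $M_C$ as the product of the invariant factors, which is polynomial time by Kannan--Bachem.

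To show it is a power of two, I would move to block-sum coordinates. For $C$ completed to its ancestor closure, Lemma~\ref{lem:bij} expresses the coefficients as an integer transform of the block sums on the skeleton fringe, with determinant a power of two. The fringe block sums realize all of $\mathbb{Z}^{\text{fringe}}$, since singletons are available as integer combinations. Projecting onto the coordinates in $C$ preserves the power-of-two property, because the odd part of every invariant factor is $1$: over $\mathbb{Z}[1/2]$ the Haar relations invert.

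Independence of $n$ follows because these invariant factors depend only on the shape of the conditioned subtree skeleton, namely relative depths and the number of halvings between nodes. Refining every block by a common factor $2$ leaves them unchanged.

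\emph{Main obstacle.} The hardest part is Step 4's claim that the index is exactly a power of two and insensitive to $n$ when $C$ is not ancestor-closed. I would handle it by completing $C$ to its ancestor closure $\bar C$ and computing the index of $\bar C$ explicitly as $\prod 2^{(\cdot)}$ via Lemma~\ref{lem:bij}. I would then show that the projection $\mathbb{Z}^{|\bar C|}\to\mathbb{Z}^{t}$ maps the image lattice onto a lattice whose index divides a power of two. A secondary point is verifying the lattice local CLT hypotheses, namely that the support is not contained in a smaller sub-coset. This holds because the single-cell generators $M_Ce_i$ span $\Lambda$ by definition.
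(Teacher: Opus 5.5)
Your proposal is correct and takes essentially the paper's route. Under the local central limit approximation, both $\kc(C)$ and $\kc_{\mathrm{ind}}(C)$ become Gaussian densities on lattices. The exponents coincide because the Haar rows are orthogonal, so $M_CM_C^{\top}=\mathrm{diag}(m_i)$; this is the paper's basis-independence remark. The ratio is then the covolume ratio $2^t[\mathbb{Z}^t:\Lambda]/2^t$, and your explicit support coset $M_C\mathbf{1}+2\Lambda$ makes the paper's sketch precise.

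You go further than the paper only on the side claims. Your $\mathbb{Z}[1/2]$ argument for the power of two is more careful than the paper's appeal to $\{0,\pm1\}$ entries and linear independence, which alone would not force powers of two. It can be shortened to $\mathrm{diag}(m_i)\mathbb{Z}^t=M_CM_C^{\top}\mathbb{Z}^t\subseteq\Lambda$. Your column-duplication argument also supplies the $n$-independence that the paper asserts without proof.
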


\begin{proof}
The coefficient values attainable by integer vectors are exactly the lattice
$M_C(\mathbb{Z}^{N})\subseteq\mathbb{Z}^{t}$.  The independence estimate
treats the $t$ coordinates as ranging over $\mathbb{Z}^{t}$ independently, so
it spreads the same total mass over a superlattice of index
$[\mathbb{Z}^{t}:M_C(\mathbb{Z}^{N})]$.  On the attainable sublattice the true
density is larger by exactly that factor.  Formally, substituting the local
central limit form makes each side a Gaussian density on the relevant lattice,
the quadratic exponents agree because $\sum_i h_i^{2}/m_i$ is the squared norm
of the orthogonal projection of $\hat s$ onto $\mathrm{span}\{\varepsilon_i\}$
and is therefore basis-independent, and what remains is the ratio of the
lattice covolumes, which is the stated index.  The index is a power of two
because every $\varepsilon_i$ has entries in $\{0,\pm1\}$ and the Haar
functionals are linearly independent, so $M_C$ has full row rank and its
elementary divisors are powers of two.
\end{proof}

\begin{corollary}[Ancestor-closed sets]
\label{cor:gapanc}
If $C$ is ancestor-closed then
$\kc(C)/\kc_{\mathrm{ind}}(C)\rightarrow 2^{|C|-1}$.
\end{corollary}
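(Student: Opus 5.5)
The plan is to evaluate the lattice index in Proposition~\ref{prop:gap} directly for ancestor-closed $C$, by factoring the constraint map $M_C$ through the block sums of the fringe $\mathcal{F}(C)$ of Theorem~\ref{thm:anc}. The limit statement then follows from Proposition~\ref{prop:gap} with no further analysis.

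\emph{Counting the fringe.} Since $H_0\in C$ and $C$ is ancestor-closed, the conditioned coefficients other than $H_0$ are attached to a set $T$ of internal block-tree nodes, with $|T|=t-1$. The set $T$ is closed under taking parents, and it contains the root whenever it is nonempty, because $H_1$ is the root's difference coefficient. So $T$ forms a rooted subtree in which every node has exactly two children in the block tree. Its fringe $\mathcal{F}(C)$ consists of the children of nodes of $T$ that are not themselves in $T$. A binary tree with $t-1$ internal nodes has $t$ leaves, so $|\mathcal{F}(C)|=t$; when $T=\emptyset$ the fringe is the single root block. The fringe blocks are disjoint, nonempty, and cover $[N]$.

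\emph{Factoring $M_C$.} Write $M_C=A\,\Sigma$. Here $\Sigma:\mathbb{Z}^N\to\mathbb{Z}^{t}$ sends an integer vector to its $t$ fringe block sums, and $A:\mathbb{Z}^t\to\mathbb{Z}^t$ expresses each conditioned coefficient in terms of those sums. For $A$, $H_0$ is the total of the fringe sums, and each $H(j,c)$ with node $(j,c)\in T$ is $S_L-S_R$, where $S_L$ and $S_R$ are themselves sums of fringe sums. The map $\Sigma$ is surjective onto $\mathbb{Z}^t$, since each block is nonempty and one can place an arbitrary integer on one cell per block. Hence $M_C(\mathbb{Z}^N)=A(\mathbb{Z}^t)$, and the index equals $|\det A|$.

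\emph{Computing $\det A$.} Build $A$ by iterating Lemma~\ref{lem:bij} top-down through $T$. Start from the single coordinate $S_{\text{root}}=H_0$. Each time a node of $T$ is split, one block-sum coordinate $S$ is replaced by the pair $(S_L,S_R)$, and the inverse step $(S_L,S_R)\mapsto(S,H)=(S_L+S_R,\,S_L-S_R)$ is the integer matrix $\left(\begin{smallmatrix}1&1\\1&-1\end{smallmatrix}\right)$ of determinant $-2$. This matrix acts as the identity on all other coordinates. After all $t-1$ splits, $A$ is a product of $t-1$ such elementary blocks, suitably embedded and permuted, so $|\det A|=2^{t-1}$. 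Then Proposition~\ref{prop:gap} gives $\kc(C)/\kc_{\mathrm{ind}}(C)\to 2^{t-1}=2^{|C|-1}$.

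The main obstacle is to make the change-of-coordinates bookkeeping rigorous. Each split must be realized as a $t'\times t'$ integer matrix on the current coordinate vector, consisting of fringe-so-far sums and already-introduced differences, so that the determinants multiply cleanly. I would handle this by induction on $|T|$, splitting a deepest node of $T$ last so that its two children are fringe blocks. As a sanity check, I would also confirm the result against Theorem~\ref{thm:pair} in the case $t=2$, where $2^{t-1}=2$.
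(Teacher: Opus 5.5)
Your proposal is correct and follows the paper's proof, which likewise inducts from $C=\{H_0\}$ (index $1$) and observes that each ancestor-closed addition splits one fringe block and so doubles the index in the block-sum basis. Your factorization $M_C=A\,\Sigma$ through the surjective fringe-sum map is that same induction with the bookkeeping written out: each split contributes a factor $\bigl(\begin{smallmatrix}1&1\\1&-1\end{smallmatrix}\bigr)$ of determinant $-2$, so $|\det A|=2^{|C|-1}$.
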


\begin{proof}
Induct on the number of conditioned coefficients.  For $C=\{H_0\}$ the matrix
is a single all-ones row and the index is $1$.  Adding an ancestor-closed
coefficient splits one fringe block into two halves.  In the block-sum basis
this substitutes for one generator two whose sum is the old one, which
doubles the index.  Hence the index is $2^{|C|-1}$.
\end{proof}

\begin{corollary}[Disjoint supports]
\label{cor:gapdisj}
If the supports of the coefficients in $C$ are pairwise disjoint the index is
$1$, recovering Corollary~\ref{cor:indep}.
\end{corollary}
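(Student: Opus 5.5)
The plan is to compute the lattice index of Proposition~\ref{prop:gap} directly and show it equals $1$, so that the limit in Proposition~\ref{prop:gap} is $1$. I will then check this against the exact statement of Corollary~\ref{cor:indep}.

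First I would record what pairwise disjointness forces. Since $H_0$ has support $[N]$, either $C=\{H_0\}$, where the index is $1$ exactly as in the base case of Corollary~\ref{cor:gapanc}, or $H_0\notin C$. In the second case every row $\varepsilon_i$ of $M_C$ is a $\{0,\pm1\}$ vector, the rows have pairwise disjoint supports, and each support is nonempty.

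The main step is to show $M_C(\mathbb{Z}^N)=\mathbb{Z}^t$. For each $i$, pick any cell $p_i$ in the support of $\varepsilon_i$. By disjointness, every other row vanishes at $p_i$. Hence $M_C e_{p_i}=\varepsilon_i(p_i)\,e_i=\pm e_i$, where $e_{p_i}$ is the standard basis vector of $\mathbb{Z}^N$ and $e_i$ that of $\mathbb{Z}^t$. The image lattice therefore contains $\pm e_1,\dots,\pm e_t$, and so it is all of $\mathbb{Z}^t$. The index $[\mathbb{Z}^t:M_C(\mathbb{Z}^N)]$ is $1$. Equivalently, in Smith normal form, the $t\times t$ minor on columns $p_1,\dots,p_t$ is a signed identity, so every elementary divisor is $1$. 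Substituting into Proposition~\ref{prop:gap} gives $\kc(C)/\kc_{\mathrm{ind}}(C)\to 1$.

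To justify the word ``recovering'', I would note that the conclusion actually holds exactly, not just in the limit. Iterating Corollary~\ref{cor:indep} over the pairwise disjoint subtrees gives $\kc(C)=2^{N}\prod_i \kc(\{i\})/2^{N}$. By Theorem~\ref{thm:single}, $\kc(\{i\})/2^N = 2^{-m_i}\binom{m_i}{(m_i+h_i)/2}=P_{m_i}(h_i)$, so $\kc(C)=\kc_{\mathrm{ind}}(C)$ identically. The asymptotic index $1$ is thus consistent with, and is the lattice shadow of, the exact independence of Corollary~\ref{cor:indep}.

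The only delicate point is the bookkeeping that Corollary~\ref{cor:indep} applies pairwise to sets on disjoint subtrees rather than merely disjoint supports. For Haar rows these coincide: disjoint dyadic blocks are never ancestor and descendant, so their subtrees are disjoint. A single induction on $t$ handles it. I expect no real obstacle beyond this.
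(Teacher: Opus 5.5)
Your proposal is correct and is essentially the argument the paper leaves implicit, since the paper states the corollary without proof as an immediate consequence of Proposition~\ref{prop:gap}: each row of $M_C$ has a $\pm1$ entry in a column where every other row vanishes, so $M_C(\mathbb{Z}^N)=\mathbb{Z}^t$ and the index is $1$. Your added check, via Corollary~\ref{cor:indep} and Theorem~\ref{thm:single}, that $\kc(C)=\kc_{\mathrm{ind}}(C)$ holds exactly is the right way to read ``recovering,'' because an asymptotic index alone could not give the exact identity; the case split on $H_0$ is harmless but unnecessary.
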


The lattice picture also explains why a natural guess fails.  One might
expect the gap to be governed by how much of each constrained block is pinned
from above, that is by the number of coefficients whose whole root path is
also constrained.  That is correct for ancestor-closed sets and wrong in
general.  A coefficient together with \emph{both} of its children is linearly
dependent in the same way, gives an index of $2$, and contains no root path
at all.  Constraints can pin a block sum laterally and from below, not only
from the root down.  The lattice index counts all of these at once.

The practical reading of Proposition~\ref{prop:gap} is a warning about
product approximations in general, and the key-space claim of
Section~\ref{sec:lockapp} is one.  An independence assumption is not mildly
optimistic on a coupled constraint set.  It is wrong by a factor equal to a
lattice index, that index grows with the amount of coupling, and it is wrong
in the direction that inflates confidence.  The proposition applies to
orthogonal integer linear functionals, which oracle responses are not, so it
does not give the size of the error for a locked netlist.  It gives the
reason to expect one, and Section~\ref{sec:expt} measures the error directly
instead.

The two exact results of the 2002 paper~\cite{thornton2002vlsi} are the
extreme special cases of Theorem~\ref{thm:anc}.  The first is a single
coefficient, for which the fringe is everything else.  The second is disjoint
supports, for which the skeletons never meet.  Everything in between, which
is where partial BDDs deposit their coefficients, is covered by the same
product form or, failing ancestor-closure, by Theorem~\ref{thm:dp}.

\subsection{Ordering the Coefficients}
\label{sec:seq}

\begin{theorem}[Exact posterior and refinement]
\label{thm:post}
Under the uniform prior, after observing that the coefficients in $C$
match the reference function's values,
$\Pr[E\mid C] = 1/\kc(C)$.  The posterior is nondecreasing as
coefficients are added, it equals $1$ when all $N$ coefficients are
conditioned, and a mismatch at any coefficient certifies inequivalence.
Each refinement re-runs only the root path of the new coefficient in
Theorem~\ref{thm:dp}.
\end{theorem}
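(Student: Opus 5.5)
The plan is to treat the statement as four small claims and discharge each from the uniform-prior setup and the results already in hand.

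\emph{Exact posterior.} Let $E$ denote the event $f\equiv g$, with $g$ drawn uniformly from all $2^N$ functions. Conditioning on the event that $g$ agrees with $f$ on every coefficient in $C$ restricts $g$ to the version space $V_C=\{g: g\vDash C\}$, which has cardinality $\kc(C)$. This set is nonempty, since $f$ itself lies in it because the observed values are $f$'s own. The conditional distribution is therefore uniform on $V_C$. Since $E$ is the singleton $\{f\}\subseteq V_C$, we get $\Pr[E\mid C]=1/\kc(C)$, which is (\ref{eq:posterior}).

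\emph{Monotonicity.} If $C\subseteq C'$, with $C'$ agreeing with $C$ on the shared coefficients, then every $g\vDash C'$ also satisfies $g\vDash C$. Hence $V_{C'}\subseteq V_C$, so $\kc(C')\le\kc(C)$ and the posterior cannot decrease.

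\emph{Saturation at $N$ coefficients.} When all $N$ coefficients are conditioned, I would invoke Lemma~\ref{lem:bij}: the full spectrum determines every block sum, in particular the size-one blocks, which are the truth-table entries. So $\hat s$ is determined, $V_C=\{f\}$, and $\kc=1$. As a cross-check, Theorem~\ref{thm:anc} gives the same answer, because the full set is ancestor-closed and its fringe consists of single cells with forced sums, each contributing $\binom{1}{\cdot}=1$.

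\emph{Mismatch.} If some coefficient of the candidate circuit differs from the reference value, the candidate's function differs from $f$ as a function, because each coefficient is a function of the truth table. Hence the two are not equivalent.

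\emph{Refinement cost.} For the incremental claim, I would inspect the recursion of Theorem~\ref{thm:dp}. The census $g_{(j,c)}$ depends only on the conditioned coefficients inside the subtree of $(j,c)$. Adding a coefficient at node $v$ therefore changes $g$ only at $v$ and its ancestors, and every other census can be cached and reused. The affected nodes are exactly the root path of $v$, so only that path needs recomputing, bottom-up, ending with the root evaluation. I expect the main care point to be this locality argument: stating precisely that the subtree-dependence is an invariant of the recursion, by induction on depth. The remaining steps are essentially definitional.
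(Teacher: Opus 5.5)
Your proposal is correct and follows the route the paper relies on. The paper gives no separate proof: it rests the posterior on Eq.~\eqref{eq:posterior}, saturation on the invertibility in Lemma~\ref{lem:bij}, and the refinement cost on each census in Theorem~\ref{thm:dp} depending only on the conditioned coefficients in its subtree (with (V6) as the machine check), which is exactly your decomposition, and your remark that $f\in V_C$ keeps $\kc(C)\ge 1$ makes explicit a point the paper leaves implicit.
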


\begin{proposition}[Greedy information-gain ordering]
\label{prop:order}
Given a matched set $C$ and a candidate coefficient $H_w$ not in $C$, the
conditional distribution of $H_w$ over the population of functions that
survive the observations in $C$ is
\[
\Pr[H_w=h \mid C] \;=\; \kc(C\cup\{H_w{=}h\})\,/\,\kc(C),
\]
which is computable from the
same recursion, and the expected one-step reduction in $\log_2\kc$ from
querying $H_w$ equals the Shannon entropy of this distribution.
Querying the maximum-entropy available coefficient is therefore the
greedy policy for expected one-step posterior gain.
\end{proposition}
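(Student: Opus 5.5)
The plan has three parts: derive the conditional distribution directly from the uniform posterior, compute it with the recursion of Theorem~\ref{thm:dp}, and then identify the expected reduction in $\log_2\kc$ with an entropy.

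\textbf{Conditional distribution.} By the discussion of Section~\ref{sec:vsmethod} and Theorem~\ref{thm:post}, the posterior given the matched set $C$ is uniform on the version space $V(C)=\{g: g\vDash C\}$, which has $\kc(C)$ elements. The value of $H_w$ partitions $V(C)$ into the classes $V(C\cup\{H_w{=}h\})$ as $h$ ranges over the attainable values. Under a uniform measure, the probability of a class is its size divided by $\kc(C)$, which is exactly the displayed formula.

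\textbf{Computability.} Each numerator $\kc(C\cup\{H_w{=}h\})$ is an instance of Theorem~\ref{thm:dp} with one extra conditioned node. I would avoid running the recursion separately for each $h$. Instead, I would run it once on $C$ with node $w$ treated as unconditioned, and store the child censuses $g_L,g_R$ at $w$. The forced-split formula then gives, for each $h$, the census at $w$ as $S\mapsto g_L(\tfrac{S+h}{2})\,g_R(\tfrac{S-h}{2})$. Propagating this up the root path of $w$, as in the refinement clause of Theorem~\ref{thm:post}, yields every numerator. There are at most $|B_w|+1$ values of $h$, so the whole distribution costs polynomial time in $N$.

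\textbf{Information gain.} Write $p_h=\kc(C\cup\{H_w{=}h\})/\kc(C)$. If the outcome is $h$, the new count is $p_h\kc(C)$, so $\log_2\kc$ drops by $-\log_2 p_h$. Averaging over outcomes drawn from the survivor population gives
\[
\sum_h p_h\bigl(-\log_2 p_h\bigr)=\mathcal{H}(H_w\mid C),
\]
the Shannon entropy of the distribution above.

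\textbf{Greedy conclusion.} Since the expected one-step gain equals this entropy, choosing the maximum-entropy available coefficient maximizes the expected one-step gain. This is greediness by definition, in line with the experimental-design reading of Section~\ref{sec:vsmethod}.

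\textbf{Main obstacle.} The one step needing care is justifying that outcomes are distributed according to the survivor population. This requires taking the reference function to be a uniformly random member of $V(C)$, which is the uniform-prior modelling assumption behind Eq.~\eqref{eq:posterior}. I would state this explicitly rather than prove it. The remaining work is bookkeeping: parity and range failures simply give $p_h=0$, and these terms contribute nothing to the entropy.
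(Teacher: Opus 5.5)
Your proposal is correct and follows essentially the paper's own justification: the uniform posterior on the survivors gives the count-ratio distribution, and the expected drop in $\log_2\kc$ is the entropy of that distribution, which the paper treats as the standard experimental-design identity under the uniform prior rather than proving it separately. Your single upward pass followed by root-path re-propagation for each $h$ is a faithful concrete form of the paper's remark that each refinement re-runs only the root path of the new coefficient in Theorem~\ref{thm:dp}.
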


The criterion in Proposition~\ref{prop:order} is not new and is not claimed
to be.  Selecting the next observation by expected information gain is
classical Bayesian experimental design~\cite{lindley1956, chaloner1995}.
Under a uniform prior over a finite hypothesis set with noiseless
deterministic tests it coincides with minimizing expected posterior
entropy~\cite{sebastiani2000}, with greedy version-space bisection, and with
generalized binary search~\cite{nowak2011}.  It should also be read as
\emph{greedy} rather than optimal.  The myopic rule is $O(\log
n)$-approximate with a matching hardness result~\cite{chakaravarthy2011}, it
is tightly $\Theta(\log n/\log \mathrm{OPT})$ under a uniform
prior~\cite{li2020}, and the optimal decision tree is NP-complete to
construct~\cite{hyafil1976}.  Adaptive submodularity~\cite{golovin2011}
supplies the conditions under which greedy is near-optimal, and is the
relevant theory here because $\kc$ depends on the observed \emph{values} and
not only on the set $C$, so the adaptivity is genuine rather than vacuous.
K\"opf and Basin apply the same greedy remaining-entropy rule, explicitly as
a heuristic rather than the optimum, to select the queries of an adaptive
side-channel attack~\cite{kopf2007}.

What Theorem~\ref{thm:dp} contributes is not the criterion but its exact
evaluability.  Bayesian experimental design ordinarily spends its effort
\emph{estimating} expected information gain, because the posterior that
defines it is intractable.  Here the posterior, and therefore the exact
information gain of every candidate coefficient, is available in closed form
before any coefficient is computed.

Two remarks on what Proposition~\ref{prop:order} means in practice.  First,
it accounts for the 2002 measurements.  Tables~V and~VI
of~\cite{thornton2002vlsi} report the number of coefficients examined before
a mismatch was found under random selection, and the averages range over
three decades across circuits.  Coefficients over small blocks have
low-entropy value distributions and are nearly worthless as evidence, and
random selection mixes them freely with the informative ones.  Second, the
policy is computable before committing.  The entropy of every available
coefficient is a by-product of the same tree recursion, so an implementation
can rank the not-yet-computed coefficients by expected information per unit
cost, where cost is the partial-BDD effort of obtaining them.  The 2002
method used whatever coefficients the diagram happened to expose.  The
procedure developed here decides which to compute before computing any of
them.

\subsection{Machine Verification}
\label{sec:verify}

Every formula above is verified by \texttt{j2\_verify\_v20.py}
(deterministic, seeded), included with this paper's artifact bundle.  (V1)
The conventions reproduce Table~IV of~\cite{thornton2002vlsi}, all sixteen
rows.  (V2) Theorem~\ref{thm:single} equals brute-force enumeration for every
level and value at $n=2,3$.  (V3) Theorem~\ref{thm:dp} equals brute force on
460 random constraint sets for $n=2,3,4$, exhaustively over all $65{,}536$
functions at $n=4$.  (V4) Theorems~\ref{thm:pair} and~\ref{thm:anc} agree
with the recursion and with brute force on random ancestor-closed sets and
300 $(H_0,H_1)$ pairs.  (V5) Corollary~\ref{cor:indep} holds on the paper's
example and on 100 random disjoint factorization identities.  (V6) Monotone
refinement holds over full random orderings with $\kc=1$ at full
conditioning.  All six groups pass.  No closed form appears in this paper
that a reader cannot re-derive by running the included script against
exhaustive enumeration.

The counting kernel used for the experiments of Section~\ref{sec:expt}
carries its own suite, which additionally checks two properties.  (V7)
Corollary~\ref{cor:indep} holds numerically, the ratio
$\kc/\kc_{\mathrm{ind}}$ being $1$ to nine digits on random disjoint sets at
$n=4,5$.  (V8) The basis-independence step of Proposition~\ref{prop:gap}
holds, in that $\sum_{B}S_B^{2}/|B|$ and $\sum_{i\in C}h_i^{2}/m_i$ agree,
over 240 random ancestor-closed sets at $n=3,\dots,6$.  The observed maximum
deviation in (V8) is exactly zero, as it must be, since that step is an
identity rather than an approximation.

\subsection{Experimental Results}
\label{sec:cecexpt}

The recursion, the closed forms, and the ordering rule are exercised here on
the same kind of data the 2002 method reported, with two purposes: to confirm
that the completed theory reproduces the earlier results exactly, and to show
what the exact version-space count reveals that the 2002 enumeration could not
reach.  All counts below are produced by the recursion of
Section~\ref{sec:cec} and, wherever the function count is small enough to
enumerate, checked against brute force; the driver is deterministic and is
included with the paper.

\paragraph{Reproducing the 2002 table}
Table~\ref{tab:cec_tableIV} regenerates Table~IV of the 2002
paper~\cite{thornton2002vlsi}: the modified-Haar spectrum $(H_0,H_1,H_2,H_3)$
of every two-variable function.  All sixteen rows agree with the original,
which fixes the sign and ordering conventions of Section~\ref{sec:haar} against
a published reference before any count is computed.

\paragraph{The closed forms against ground truth}
Table~\ref{tab:cec_verify} evaluates the two worked examples the 2002 paper
gave, the single coefficient $H_0=2$ at $n=2$ with $\kc=4$ and the
disjoint pair $H_2=0,\,H_3=-2$ at $n=2$ with $\kc=2$, together with a sweep of
the pair and single-coefficient closed forms at $n=4$.  The recursion, the
closed forms of Theorems~\ref{thm:single}--\ref{thm:anc}, and brute-force
enumeration agree in every case in which brute force is affordable, and the
recursion and the closed form agree in the $n=4$ cases whose $2^{16}$-function
brute force is omitted from the table.

\paragraph{The version space as evidence accumulates}
Table~\ref{tab:cec_shrink} is the measurement the 2002 method wanted but could
not compute in general: the exact size of the version space as coefficients are
conditioned one at a time, low-order-first, for three functions of $n=4$
variables.  Every column begins at $2^{16}=65536$ candidate functions and
descends to a single function once all sixteen coefficients are matched, and
every entry was verified against brute force.  The three columns descend at
very different rates, and the difference is the point.  The threshold function,
which is logic~$0$ on the first eight minterms and logic~$1$ on the last eight,
is pinned to a single function after only \emph{two} coefficients: $H_0=0$ and
$H_1=16$ together force the left half of the truth table to be all-zero and the
right half all-one, so no other function is consistent with them.  The parity
function and the random function, by contrast, retain thousands of candidates
after the same two coefficients and are resolved only as the small-support
coefficients near the leaves of the block tree are added.  A structured
function announces itself in its low-order spectrum; an unstructured one does
not.  This is exactly the behavior the greedy ordering rule of
Section~\ref{sec:seq} exploits, and it is invisible to a method that reports
only whether a fixed set of coefficients happened to match.

\subsection{Haar Matching Against Random Test Vectors}
\label{sec:cecspeedup}

Probabilistic equivalence checking competes, in practice, with the simplest
possible check: apply random input vectors to both circuits and declare them
different the moment an output disagrees.  The version-space view makes the
comparison precise.  A random test vector conditions one truth-table cell, so
it removes from the version space only the functions that disagree at that
single input; it detects a difference between two circuits only when it happens
to land on one of the inputs where they differ.  A low-order Haar coefficient
conditions a signed sum over a large dyadic block, so a difference anywhere in
that block moves the coefficient and is detected at once.

Table~\ref{tab:cec_speedup} quantifies this for the task of \emph{detecting}
inequivalence.  Two functions are made to differ in exactly $d$ minterms, and
we record how many observations each method needs to notice: for Haar matching,
the position of the first coefficient that differs when coefficients are
examined low-order-first; for random testing, the number of random input
vectors applied until one lands on a differing minterm.  The expectation of the
latter is $(N+1)/(d+1)$ for $N=2^n$ inputs, and the measured means match it.
The speedup is largest exactly where it matters most.  When the two circuits
differ in a single minterm, the subtlest possible discrepancy and the one a
random search is least likely to find, the global coefficient $H_0$ changes by
$\pm2$ and detects the difference on the first observation, against an expected
$(N+1)/2$ random vectors: a factor of several hundred at $n=10$ and several
thousand at $n=14$.  As $d$ grows the random search catches up, because a
larger set of differing minterms is easier to hit at random, and the
occasional cancellation of a difference in $H_0$ sends the Haar method one or
two coefficients deeper before it detects; the advantage narrows but remains
an order of magnitude or more across the range measured.

Two honest qualifications bound the comparison.  First, it is a comparison for
\emph{disproving} equivalence.  Proving equivalence outright requires driving
the version space to a single function, and both methods need all $N$
observations to do that, since the Haar coefficients and the truth-table cells
are each a basis of the same space; the value of the spectral method is the
asymmetry between the two directions, a single mismatched coefficient is a
proof of inequivalence while a single matched one is only weak evidence of
equivalence, and Table~\ref{tab:cec_shrink} is the record of how that weak
evidence compounds.  Second, the observation cost is counted here in
coefficients against test vectors, one apiece; a Haar coefficient is more
expensive to obtain from a decision diagram than a single simulation, so the
practical speedup is smaller than the observation-count ratio.  The
output-probability route of Section~\ref{sec:haar} keeps the per-coefficient
cost polynomial in the diagram size, which is what made the 2002 method
viable, but the honest claim is a large reduction in the \emph{number} of
observations needed to expose a difference, not an equal reduction in wall-clock
time.

\begin{table}[t]
\caption{Reproduction of Table~IV of the 2002 method~\cite{thornton2002vlsi}: the modified-Haar spectrum $(H_0,H_1,H_2,H_3)$ of every two-variable function, regenerated by the recursion of Section~\ref{sec:cec}.  The truth table lists $f(00)f(01)f(10)f(11)$.  All sixteen rows match the 2002 table.}
\label{tab:cec_tableIV}
\centering\small\setlength{\tabcolsep}{4pt}
\begin{tabular}{ccccc@{\quad}ccccc}
\toprule
$f$ & $H_0$ & $H_1$ & $H_2$ & $H_3$ & $f$ & $H_0$ & $H_1$ & $H_2$ & $H_3$\\
\midrule
\texttt{0000} & 4 & 0 & 0 & 0 & \texttt{1000} & 2 & -2 & -2 & 0\\
\texttt{0001} & 2 & 2 & 0 & 2 & \texttt{1001} & 0 & 0 & -2 & 2\\
\texttt{0010} & 2 & 2 & 0 & -2 & \texttt{1010} & 0 & 0 & -2 & -2\\
\texttt{0011} & 0 & 4 & 0 & 0 & \texttt{1011} & -2 & 2 & -2 & 0\\
\texttt{0100} & 2 & -2 & 2 & 0 & \texttt{1100} & 0 & -4 & 0 & 0\\
\texttt{0101} & 0 & 0 & 2 & 2 & \texttt{1101} & -2 & -2 & 0 & 2\\
\texttt{0110} & 0 & 0 & 2 & -2 & \texttt{1110} & -2 & -2 & 0 & -2\\
\texttt{0111} & -2 & 2 & 2 & 0 & \texttt{1111} & -4 & 0 & 0 & 0\\
\bottomrule
\end{tabular}
\end{table}
\begin{table}[t]
\caption{The counting recursion and the closed forms of Section~\ref{sec:cec} agree with brute-force enumeration and reproduce the two worked examples of the 2002 paper~\cite{thornton2002vlsi}.  A dash marks a case whose brute force ($2^{16}$ functions at $n{=}4$) is omitted; the recursion and closed form agree there and both agree with brute force wherever it is run.}
\label{tab:cec_verify}
\centering\footnotesize\setlength{\tabcolsep}{4pt}
\begin{tabular}{lrrrl}
\toprule
conditioned set & brute & recursion & closed form & source\\
\midrule
$n{=}2$, $H_0{=}2$ & 4 & 4 & 4 & Eq.~(24)\\
$n{=}2$, $H_2{=}0,\,H_3{=}{-}2$ & 2 & 2 & 2 & disjoint\\
$n{=}4$, $H_0{=}4,\,H_1{=}0$ & -- & 3136 & 3136 & Thm.~pair\\
$n{=}4$, $H_0{=}0,\,H_1{=}0$ & -- & 4900 & 4900 & Thm.~pair\\
$n{=}4$, $H_0{=}8,\,H_1{=}4$ & -- & 448 & 448 & Thm.~pair\\
$n{=}4$, $H(2,0){=}2$ & -- & 14336 & 14336 & Thm.~single\\
$n{=}4$, $H(3,0){=}0$ & -- & 24576 & 24576 & Thm.~single\\
\bottomrule
\end{tabular}
\end{table}
\begin{table}[t]
\caption{Version-space size $k(C)$ for three functions of $n{=}4$ variables as the $N{=}16$ modified-Haar coefficients are conditioned low-order-first.  Each column starts at $2^{16}=65536$ candidate functions before any coefficient is seen and descends to a single function when all sixteen are matched.  Every entry was checked against brute-force enumeration.}
\label{tab:cec_shrink}
\centering\small
\begin{tabular}{rrrr}
\toprule
coeffs conditioned $t$ & parity & threshold & random\\
\midrule
$0$ & $65536$ & $65536$ & $65536$\\
$1$ & $12870$ & $12870$ & $11440$\\
$2$ & $4900$ & $1$ & $3920$\\
$3$ & $2520$ & $1$ & $2016$\\
$4$ & $1296$ & $1$ & $864$\\
$5$ & $864$ & $1$ & $576$\\
$6$ & $576$ & $1$ & $384$\\
$7$ & $384$ & $1$ & $256$\\
$8$ & $256$ & $1$ & $128$\\
$9$ & $128$ & $1$ & $64$\\
$10$ & $64$ & $1$ & $32$\\
$11$ & $32$ & $1$ & $16$\\
$12$ & $16$ & $1$ & $8$\\
$13$ & $8$ & $1$ & $4$\\
$14$ & $4$ & $1$ & $2$\\
$15$ & $2$ & $1$ & $2$\\
$16$ & $1$ & $1$ & $1$\\
\bottomrule
\end{tabular}
\end{table}
\begin{table}[t]
\caption{Observations required to \emph{detect} that two functions differ, by Haar-coefficient matching against random test vectors, versus the number $d$ of differing minterms ($400$ trials per row).  \emph{Haar}: mean position of the first differing coefficient, low-order-first.  \emph{Random}: mean number of random vectors until one hits a differing minterm, with exact expectation $(N{+}1)/(d{+}1)$.}
\label{tab:cec_speedup}
\centering\small
\begin{tabular}{rrrrrr}
\toprule
$n$ & $d$ & Haar & random & $(N{+}1)/(d{+}1)$ & speedup\\
\midrule
$10$ & $1$ & $1.00$ & $539.4$ & $512.5$ & $512\times$\\
$10$ & $2$ & $5.57$ & $334.3$ & $341.7$ & $61\times$\\
$10$ & $4$ & $1.99$ & $192.8$ & $205.0$ & $103\times$\\
$10$ & $8$ & $1.39$ & $115.5$ & $113.9$ & $82\times$\\
$10$ & $16$ & $1.22$ & $59.1$ & $60.3$ & $49\times$\\
\midrule
$14$ & $1$ & $1.00$ & $8432.2$ & $8192.5$ & $8192\times$\\
$14$ & $2$ & $6.45$ & $5508.6$ & $5461.7$ & $846\times$\\
$14$ & $4$ & $1.84$ & $3334.9$ & $3277.0$ & $1781\times$\\
$14$ & $16$ & $1.23$ & $1010.9$ & $963.8$ & $785\times$\\
$14$ & $64$ & $1.12$ & $248.3$ & $252.1$ & $225\times$\\
\bottomrule
\end{tabular}
\end{table}
\FloatBarrier

\section{Application II: Key Counting for Locked Netlists}
\label{sec:lockapp}
\label{sec:lock}

The same version-space machinery moves into hardware security with almost
nothing changed but the names of its parts.  What was a hidden Boolean function
is now a secret key, one of $2^k$ candidates; what were spectral coefficients
are now input--output pairs read from an unlocked oracle chip.  Each such pair
keeps the keys that reproduce the observed response and discards the rest, so
the keys still standing after a query set form a version space, and its
logarithm, the surviving entropy, is on the same axis as the key length the
lock advertises.  The counting engine is again the one of
Section~\ref{sec:cec}, run this time over the gate-level factor graph of the
netlist in place of the Haar block tree, and the reparameterization that made
the spectral count local reappears in a new guise: a combinational netlist
fixes its own internal signals, so the count is plain model counting and not
projected model counting.  The rest of the section puts the machinery on a
locked netlist, establishes that the count is not a projected count, and names
the graph width that sets its cost.

The three subsections that follow sketch logic locking, the oracle-guided
attack, the threat model they share, and what the security community reports
today.  None of this is new, and a reader at home in hardware security can
proceed directly to Section~\ref{sec:constraint}.

\subsection{Locked Netlists and the Oracle-Guided Attack}

A locked netlist $C(X,K)$ has primary inputs $X$ and key inputs $K$, with a
secret key $K_c$ for which $C(X,K_c)=C_o(X)$ for every $X$, where $C_o$ is the
original design.  The constructions differ mainly in what the key
controls.  Random logic locking inserts XOR and XNOR key gates at randomly
chosen nets, so that a wrong key bit inverts the signal on its
net~\cite{roy2008epic}.  Strong logic locking places those gates so that they
interfere with one another as much as possible~\cite{yasin2016sll}, which
defeats sensitization, the attack that isolates one key bit at a time by
propagating it to an output while blocking the others~\cite{rajendran2012}.
Point-function schemes such as SARLock~\cite{yasin2016sarlock} and
Anti-SAT~\cite{xie2019antisat} add a comparator that corrupts the output on a
single input pattern per wrong key, and stripped-functionality logic locking
removes a set of input cubes and restores them with a key-controlled
unit~\cite{yasin2017sfll}.  Older work calls all of this logic encryption;
we follow Engels, Hoffmann and Paar in preferring logic locking, since nothing
is encrypted~\cite{engels2022}.

The threat model assumes the attacker holds two artifacts, a copy of the
locked netlist obtained from the foundry and an unlocked chip that serves as
an oracle.  The SAT attack of Subramanyan, Ray and Malik turns this into a
procedure~\cite{subramanyan2015}: it builds a miter from two copies of the
locked netlist sharing primary inputs but carrying independent keys, asks a
solver for an input on which the copies disagree, applies that distinguishing
input to the oracle, and adds the constraint that any surviving key must
reproduce the response, ending when no distinguishing input remains.  The
attack is usually described in terms of keys but really operates on
equivalence classes, as Sirone and Subramanyan say directly when they describe
it as working ``by using a Boolean SATisfiability solver to iteratively find
inputs that distinguish between equivalence classes of
keys''~\cite{sirone2020}.

\subsection{The Threat Model and Its Assumptions}
\label{sec:threat}

Because the number this paper reports is defined against a specific attacker,
it is worth setting down exactly what that attacker holds, what it may do, and
what the netlist is assumed to be, so that the surviving entropy can be read
without ambiguity about the conditions it was measured under.

The attacker holds two artifacts and nothing else.  The first is the locked
netlist itself, at gate level, which in the standard supply-chain story leaks
from an untrusted foundry that must be given the design in order to fabricate
it.  The second is an activated chip, bought on the open market, whose key
register has been programmed with the correct value and sealed.  That chip is
an oracle: the attacker drives its primary inputs with a pattern and reads the
correct primary outputs, and can repeat this for as many patterns as it cares
to apply.  What the attacker cannot do is read the key out of the chip, probe
an internal net, or observe anything beyond the input--output behaviour; the
oracle is a black box whose only leak is the function it computes.  This is the
model under which every scheme cited above advertises its security, and it is
the model measured here.

A query in this paper is one such input--output pair.  Two regimes are
distinguished throughout and they cost the attacker very different things.  A
\emph{random} query is an input pattern drawn without reference to the netlist,
which is free: it is one evaluation of a chip the attacker already owns.  A
\emph{chosen} query is a pattern produced by reasoning about the locked netlist
so as to separate keys the earlier queries left standing, which is what a
satisfiability-based attack does at every step, and it is not free, because the
reasoning is a solver call whose cost tracks the very property that makes the
lock hard.  The surviving entropy is a function of the query set, so it is
reported against both regimes, and the gap between them is itself measured in
Section~\ref{sec:adversarial}.

The method assumes the netlist is combinational and acyclic.  Constant
propagation and the topological argument of Lemma~\ref{lem:notproj} both rely
on it, and it is why cyclic locking is declared out of scope rather than run
and reported: a netlist with combinational feedback has no topological order in
which each internal net is determined by earlier ones, and the count is not
defined the same way.  It assumes further that the key ports can be identified
in the netlist, which is a reader's problem rather than a counter's and is
treated as such in Section~\ref{sec:validation}, and that the gate-level
structure is available, which the threat model already grants because the
foundry copy is structural.

The relation to the SAT attack is a matter of what is being produced.  The SAT
attack runs to a key, or to a timeout, and reports how long it took or how many
iterations it needed; this method runs to a count, and reports how many keys an
attacker with the same query set cannot yet tell apart.  The two meet at a
single condition.  The SAT attack halts exactly when no input pattern
distinguishes two surviving keys, and that is precisely the condition under
which the count this paper reports has stopped being an upper bound and become
exact.  The attack's termination test and the paper's tightness certificate are
the same satisfiability question, put to different ends, and
Section~\ref{sec:floor} makes that identity explicit.

One caveat belongs with the model rather than with the method.  Some schemes do
not use the key bits directly but route them through a fixed preprocessing
stage before they reach the logic, so that the object the attacker recovers is
a preprocessed key rather than the raw one.  The count reported here is of raw
key values consistent with the oracle.  When the preprocessing is injective the
two coincide, and when it is not the raw count is an upper bound on the number
of distinguishable effective keys, in the same conservative direction as every
other bound in this paper.  We flag this so that a surviving-entropy figure is
not read as a claim about a key-derivation stage the method does not inspect.

\subsection{Past Reported Results and Omissions}

Four quantities carry most security claims.  \emph{Key length}, or $2^k$ as a
key space, is the original claim and is still widely used.  \emph{Iteration
count}, the number of distinguishing patterns the SAT attack needs, became the
operative number after 2015, and point-function schemes are advertised by it:
their design goal is to force an iteration count exponential in the key
length, achieved by arranging that each pattern eliminates only one key.
\emph{Output corruptibility} is the fraction of input and wrong-key
combinations on which the locked circuit differs from the original, commonly
targeted at a Hamming distance of one half.  \emph{Solver runtime} is the time
an attack takes before it completes or times out, and Sweeney, Heule and
Pileggi name the problem with it: ``It is common to only report attack run
times, however, the keys that can be obtained at timeout may be very close to
a correct solution''~\cite{sweeney2020}.  None of the four reports how many keys the
attacker still cannot rule out.

Two recent surveys confirm the gap: neither the
review in~\cite{azar2019} nor the more recent survey of the
field~\cite{kamali2022advances} reports a key count or an entropy for any
scheme, and the latter observes that ``generic and comprehensive security
metrics are completely missing.''  Work that formalizes the problem does so
definitionally rather than numerically, with simulation-based and
indistinguishability notions satisfied by universal
circuits~\cite{masserova2022logic,beerel2022}, settling what security would
mean without saying how far a given netlist is from it.  The one composite
metric we know of combines corruption, structural resilience and a residual
key-space term measured by running attacks under a time
budget~\cite{sisejkovic2018}; it is attack-relative, where what follows is an
exact count conditioned on a given query set.

Counting has been brought to bear on locking once before, differently.
Sweeney, Garg and Pileggi estimate the fraction of the key space whose
corruption clears a threshold, using the approximate counter
ApproxMC~\cite{sweeney2024}, and note that ``exact solutions can only be
obtained for a limited key and input width.''  That quantity is a property of
the lock alone.  It is not conditioned on anything an attacker has observed,
so it does not move as queries accumulate, and it is not the number this paper
computes.

\subsection{Relation to Quantitative Information Flow}
\label{sec:qif}

The measure reported here is a quantitative information flow measure, and the
machinery for computing such measures by counting is established.  Backes,
K\"opf and Rybalchenko recover the partition a program induces on its secret
inputs and count each block~\cite{backes2009}; Klebanov, Manthey and Muise do
the same by reducing the program to a Boolean formula and applying a model
counter~\cite{klebanov2013}; and K\"opf and Basin define an adaptive
attacker's residual uncertainty as a function of the query count and plot it
against that number~\cite{kopf2007}.  What this paper does is that
computation, for a locked netlist under an oracle-guided attack.

That combination has not been made.  Quantitative information flow has reached
hardware, but measures a different thing: QIF-Verilog and its successors
quantify how much a design leaks about the data it processes, such as a
cryptographic key reaching an observable output~\cite{guo2019}.  The secrecy
of a locking key under oracle access is not what those tools are pointed at,
and the one study bringing the two communities together evaluates locking by
path sensitization instead, closing with the observation that ``quantitative
information flow analysis methods may identify additional
vulnerabilities''~\cite{reimann2025}.

Two further effects make that omission worse.  The first is that keys can be
functionally equivalent, which reduces the key space below $2^k$ by an amount
that is not always small.  The effect is documented: Sweeney et al.\ report
that in a banyan locking network ``multiple keys produce the same permutations
of the inputs on the outputs''~\cite{sweeney2020}, Hu et al.\ give a
three-bit SARLock instance in which three incorrect keys each unlock part of
the input space~\cite{hu2024}, and Zhong and Guin state the general position
that ``if multiple keys remain in the search space, it must be true that
multiple solutions are valid since they all give the same output
response''~\cite{zhong2023}.  What is missing in each case is a number.

The second effect is the structural premise this paper builds on.  Zhong and
Guin, studying SAT attack complexity, observe that ``a response 0 at the OR
gate effectively splits the logic cone into two subcones, where keys inside
the subcone are dependent, but independent from the other
subcone's''~\cite{zhong2023}.  That is a conditional independence statement
about key bits, found empirically and left unexploited, and sum--product
message passing is the standard way to exploit statements of that form;
Juretus and Savidis point the same way when they organize their analysis of
locking resilience around logic cones and fanout-free
regions~\cite{juretus2020}.

\subsection{The Constraint System}
\label{sec:constraint}

Take as given a locked netlist $C(X,K)$ and a query set
$\{(x_i,y_i)\}_{i=1}^{t}$ collected from the oracle.  The quantity to compute is
$|V_t|$, the size of the version space
\[
V_t=\bigl\{\,K\in\{0,1\}^{k} \;:\; C(x_i,K)=y_i,\ i=1,\dots,t\,\bigr\}.
\]

For a single query we set the primary inputs to $x_i$ and let constants
propagate through the netlist.  A gate all of whose inputs are now fixed becomes
a constant in turn, and a gate left with one symbolic input becomes a buffer or
an inverter, so the sweep prunes the netlist down to a residual circuit over
the key bits alone.  That residual is usually a small part of the whole design,
for the plain reason that most of a circuit computes something the key does not
touch.

Every surviving gate $g$, with output net $w$ and symbolic inputs
$u_1,\dots,u_r$, then contributes one indicator factor
\begin{equation}
\psi_g(w,u_1,\dots,u_r)\;=\;\bigl[\,w=\mathrm{op}_g(u_1,\dots,u_r)\,\bigr],
\label{eq:gatefactor}
\end{equation}
while the observed response fixes each primary output net to its value in
$y_i$.  Carrying this out for all $t$ queries, keeping the internal nets private
to each query but sharing the key bits across them, yields a single factor set
whose satisfying assignments are exactly the keys of $V_t$, each paired with the
internal signals that key forces.

\subsection{Key Counting is Not Projected Counting}
\label{sec:notprojected}

Read naively, the problem looks far more expensive than it is, and the gap
between the naive reading and the true one is large enough to be worth a lemma,
even though the principle the lemma rests on is standard.

\begin{lemma}
\label{lem:notproj}
Let $\Phi$ be the factor set above, over the key variables $K$ and the
residual internal variables $W$.  Then the number of satisfying assignments
of $\Phi$ over $K\cup W$ equals $|V_t|$.
\end{lemma}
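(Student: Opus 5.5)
The plan is to exhibit a bijection between satisfying assignments of $\Phi$ over $K\cup W$ and keys in $V_t$, using the fact that the residual netlist is combinational and acyclic. The map is the projection $\pi:(K,W)\mapsto K$. Two claims will be proved: every satisfying assignment projects into $V_t$, and every $K\in V_t$ has exactly one preimage. Together they give $|\Phi^{-1}(1)| = |V_t|$ with no projection or existential quantification needed.

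\textbf{Uniqueness of the internal nets.} Fix a key $K$ and a query index $i$. Because the netlist is acyclic, the residual circuit left after constant propagation of $x_i$ has a topological order on its gates. I will induct along this order. Each gate $g$ has symbolic inputs $u_1,\dots,u_r$ that are key bits or outputs of earlier gates. By the induction hypothesis, these inputs are already uniquely determined by $K$ on any assignment satisfying the factors of earlier gates. The indicator factor $\psi_g$ of Eq.~(\ref{eq:gatefactor}) is nonzero only when $w=\mathrm{op}_g(u_1,\dots,u_r)$, so $w$ is uniquely determined as well. Hence for each $(K,i)$ there is at most one assignment to the private internal nets of query $i$ satisfying all gate factors, namely the simulated values. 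Since the internal nets of different queries are disjoint and only $K$ is shared, the same holds jointly across all $t$ queries.

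\textbf{Matching the output constraints to $V_t$.} The simulated assignment always satisfies every gate factor. The remaining factors pin each primary output net to its value in $y_i$. The simulated value of that net is exactly $C(x_i,K)$, because constant propagation preserves the function: a gate fixed to a constant, buffer or inverter computes the same value as the original gate under $X=x_i$. So the unique extension of $K$ satisfies $\Phi$ iff $C(x_i,K)=y_i$ for all $i$, that is, iff $K\in V_t$. This shows $\pi$ is injective on satisfying assignments and that its image is exactly $V_t$.

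\textbf{Main obstacle.} The expected difficulty is bookkeeping rather than depth. One must check that the sweep produces no dangling or free variables, for example a net eliminated by propagation but still listed in $W$, or an output net that became constant. A free variable would double the count; a constant output that disagrees with $y_i$ is handled correctly, since it zeroes every key, consistent with $C(x_i,K)\ne y_i$. I would handle this by defining $W$ to be exactly the outputs of surviving gates, so every variable in $W$ is the output of some factor. The acyclicity assumption of Section~\ref{sec:threat} is where the argument genuinely depends on the threat model: with feedback, a gate's output need not be unique or may not exist.
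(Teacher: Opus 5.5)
Your proof is correct and takes the same approach as the paper: induction along the topological order shows that $K$ forces each internal net of each query, so every key extends to at most one satisfying assignment, and that extension satisfies the output pins exactly when $K\in V_t$. Your extra bookkeeping makes explicit what the paper's short proof leaves implicit, namely that constant propagation preserves $C(x_i,K)$ and that $W$ contains only outputs of surviving gates, so no free variable inflates the count.
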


\begin{IEEEproof}
The netlist is combinational and acyclic.  Fix a key $K$ and a query index
$i$.  Every residual internal net of query $i$ is the output of a gate whose
inputs are either constants determined by $x_i$ or nets earlier in
topological order, so by induction each internal net takes exactly one value.
The factors in~\eqref{eq:gatefactor} are satisfied by that value and by no
other.  Hence each $K\in V_t$ extends to exactly one satisfying assignment,
and each $K\notin V_t$ extends to none.
\end{IEEEproof}

Counting keys therefore requires no projection.  The principle is not new and
we do not claim it.  Aziz et al.\ state it in passing when introducing
projected counting, observing that counting the models of a formula gives the
right answer whenever the added variables are functionally defined by the
original ones~\cite{aziz2015}, and Lagniez, Lonca and Marquis prove it: if the
variables to be forgotten are definable from those retained, forgetting them
leaves the count unchanged~\cite{lagniez2020}.  The same phenomenon carries
the name \emph{independent support} in the sampling and counting
literature~\cite{ivrii2016}.  Lemma~\ref{lem:notproj} is the instance
in which the definability is not discovered by a solver but given by the
netlist: a combinational circuit determines its own internal signals, by
construction, for every key.

The consequence is a complexity statement that neither literature has drawn on
its own.  Existentially quantifying the internal signals away turns the problem
into projected model counting, whose algorithms are doubly exponential in
treewidth, a bound that is tight under the exponential time
hypothesis~\cite{fichte2022}.  Leaving the internal signals in place keeps it
plain model counting, singly exponential in width.  Key counting always lands
on the plain side of that line and never on the projected side, and at the
widths Section~\ref{sec:expt} measures that placement is the whole difference
between a count one can actually compute and one there would be no point
attempting.

\subsection{Two Widths}
\label{sec:twowidths}

What sets the cost is the induced width of the elimination order, and there are
two graphs the order can run over.  The \emph{key-moral width} takes the graph
on the key variables alone, reached by eliminating every internal variable
first, so that two key bits become neighbours the moment some internal
structure ties them together; it is the graph a practitioner who reasons about
relations among key bits directly would picture.  The \emph{factor width} keeps
the internal variables in the graph and takes the induced width of the
gate-level factor graph itself.

The two are not close.  Take a point-function lock, in which all $k$ key bits
are compared against a pattern by a comparator whose outputs feed an AND-tree
ending in one primary output.  Eliminating the internal nets of that tree makes
every key bit adjacent to every other, so the key-moral graph becomes a clique
on $k$ vertices and its width is $k-1$.  But the AND-tree is a tree: retain its
internal nets and every factor has scope three, and the width of the factor
graph does not grow with $k$ at all.

Interference-maximizing placement behaves the same way.  Strong logic locking
positions its key gates to interfere as much as possible, precisely so that no
key bit can be isolated~\cite{yasin2016sll}, which is a claim about pairwise
sensitization; it makes the key-moral graph dense and says nothing whatever
about the width of the gate-level factor graph.  Section~\ref{sec:expt} shows
the two widths pulling far apart on exactly these constructions.

\subsection{A Second Representation}
\label{sec:engineb}

The same integer can be reached a second way, and the second way has the
opposite cost trend, which is what makes it worth carrying.  With the primary
inputs of a query fixed, simulate the netlist symbolically until every net is a
function of the key bits alone; conjoin, for each observed output, the agreement
between that net and the value seen, and then conjoin those constraints across
the queries.  What results is a Boolean function whose models are exactly $V_t$,
and its count is a single traversal of the decision diagram that holds it.

Read against $t$, the two costs run in opposite directions.  Every query adds
one more copy of the residual circuit to the factor graph, so the elimination
width climbs with $t$; every query further constrains the decision diagram,
which represents the version space itself and so contracts with $t$.  Which
representation is the cheaper is therefore a question about the object being
measured and not about either implementation, and Section~\ref{sec:engines}
settles it on real benchmarks.

The count on the diagram has to be exact.  A minterm count returned as a
double-precision float is already wrong above $2^{53}$, comfortably inside the
range a $64$-bit key space occupies, so throughout, the C engine uses
\texttt{Cudd\char`_Apa\-Count\-Minterm} in place of
\texttt{Cudd\char`_Count\-Minterm}.

\subsection{Failure Analyses of the Methods}
\label{sec:lockfail}

Two constructions ought to be hard for these methods, and naming them in
advance is the honest way to report where a method breaks.

The first is routing-network locking.  Full-Lock~\cite{kamali2019} and
InterLock~\cite{kamali2020} hand the key control of a multistage permutation
network, and banyan and butterfly topologies are the textbook worst cases for
graph width; InterLock in particular is built so that its routing constraints
resist being pulled apart for separate analysis.  If the method fails anywhere,
this is the family in which one would expect it to fail.

The second is width growth with the query count.  A query contributes a fresh
copy of the residual circuit that shares the key bits, and where the residual
is small those copies scarcely interact.  A point-function lock is the opposite
case: every query reaches every key bit, so the copies stack and the factor
width climbs with $t$, and because the scheme is arranged so that a query
eliminates just one key, a great many queries are needed and the two pressures
reinforce each other.

This second prediction concerns Engine~A alone and does not carry to Engine~B.
After $t$ queries the version space of a point-function lock is the full key
space minus at most $t$ points, and a diagram that excludes $t$ points grows
only linearly in $t$, never exponentially.  Engine~B's representation is thus
smallest just where Engine~A's width is largest.  If Section~\ref{sec:engineb}
has the argument right, Engine~A should halt on a point-function lock within a
dozen or so queries while Engine~B sails past it untouched, and
Section~\ref{sec:expt} reports both halves.

\section{Experimental Evaluation}
\label{sec:expt}

The evaluation runs two campaigns.  E.1 examines the structure of the counting
problem on netlists locked for this paper, where we know the scheme
implementation exactly; E.2 measures the published locked benchmarks, where we
do not.  Where the two disagree, E.2 is the one that governs.

Every count is produced by the engines of Section~\ref{sec:impl}, which share
no representation: Engine~A eliminating variables over the residual factor graph
and paying in induced width, Engine~B carrying a decision diagram over the key
variables and paying in diagram size, in both a self-contained Python form and a
C form on CUDD~\cite{cudd}.  Below $k=20$ a brute-force sweep of all $2^k$ keys
supplies ground truth as well.  Because the implementations are independent,
their agreement is evidence rather than a single computation restated, so every
instance more than one of them can reach is checked by all that can.

\subsection*{E.1 Experiments: Locked Netlist Examples}

The netlists used for the E.1 experiments are generated ripple-carry adders, array
multipliers of the c6288 family, and equality comparator trees.  The locking
schemes are our own reimplementations in the style of random insertion,
interference-maximizing placement and the point-function family, rather than
the original authors' code, and every width and count reported is exact for
the instance given.

\subsection{Validating the Counter}
\label{sec:e10}

The counter is checked against exhaustive enumeration over all $2^{k}$ keys,
on adders, comparators and multipliers, under all three locking schemes, at
key lengths three to five and query counts one to three: eighty-one cases with
zero mismatches.  The check is inexpensive, and it is what lets the rest of
the numbers be read as measurements.

\subsection{Measuring the Two Widths}
\label{sec:e11}

\paragraph{Setup}  Three benchmark families and three locking schemes, at key
lengths of $8$, $16$, $24$ and $32$ bits, with four oracle queries per
instance.  Both of the widths defined in Section~\ref{sec:twowidths} are
computed from a greedy minimum-fill elimination order.

\paragraph{Result}  The width over key variables alone tracks the key length.
For interference-maximizing and point-function locking it is $k-1$ almost
everywhere: $7$ at eight key bits, $15$ at sixteen, $23$ at twenty-four and
$31$ at thirty-two.  That is a clique, and it is a clique for the reason the
schemes intend.  The factor width behaves quite differently.  For
point-function locking it ranges from $6$ to $12$ across the whole sweep,
while the key-moral width over the same instances ran from $7$ to $31$; at $32$
key bits, where the key-moral width is $31$, the factor width is $10$ on
adders and on comparators.  For random locking it ranges from $2$ to $14$.
Fig.~\ref{fig:e1width} plots the two against each other.

\paragraph{The hardest case}  Interference-maximizing locking on a $4\times4$
array multiplier reaches factor width $18$ at thirty-two key bits, against a
key-moral width of $31$, and the same factor width $18$ at twenty-four against
$23$.  These are the configurations in the sweep where the two widths come
closest, and we report them because they bound the claim: array multipliers
have dense reconvergent structure of their own, and combining that with
placement chosen to maximize interference removes most of the margin.

\begin{figure}[t]
\centering
\includegraphics[width=0.86\columnwidth]{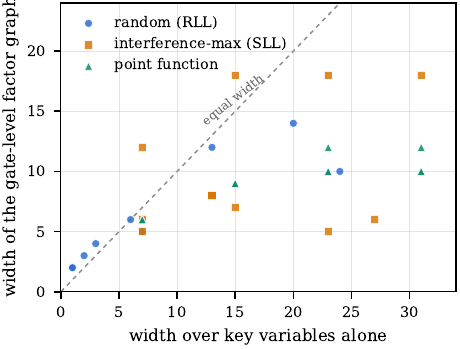}
\caption{Width over key variables alone, against width of the gate-level
factor graph, for every instance in the campaign E.1 width sweep.  Points below the
diagonal are instances where retaining the internal signals is cheaper than
eliminating them.  Point-function locking sits far below the diagonal and does
not move right to left as the key lengthens.  At $32$ key bits the key-moral
width is $31$ and the factor width is $10$ to $12$.  The array-multiplier
points under interference-maximizing locking carry the largest factor widths
in the sweep, reaching $18$, and are the hardest case found.}
\label{fig:e1width}
\end{figure}

\subsection{Version-Space Trajectories}
\label{sec:e12}

\paragraph{Setup}  An eight-bit ripple-carry adder locked with twenty-four key
bits under each of the three schemes, queries drawn uniformly at random and
applied in sequence, with $|V_t|$ computed exactly after each one.

\paragraph{Three behaviors}  The three schemes behave qualitatively
differently, and the trajectories say so plainly.  Under point-function
locking the version space falls from $16{,}777{,}215$ to $16{,}777{,}214$ to
$16{,}777{,}213$ and so on down, one key per query, which is the defining
property of the family and the reason such schemes force an attack iteration
count exponential in the key length.  Under interference-maximizing locking
the count collapses through
$131{,}072$, $12{,}288$, $48$, $6$, $3$, $3$, $3$, $1$, so eight random
queries identify the key uniquely.  The scheme is built to resist
sensitization and it does, but it offers very little to an oracle-guided
attacker who simply asks.  Under random locking the count falls and then
stops, at $98{,}304$, $98{,}304$, $49{,}152$, $16{,}384$, $16{,}384$, $8{,}192$
and then $8{,}192$ for every further query out to twelve.  Those $8{,}192$
survivors are not separable by any input pattern because they are functionally
equivalent, so thirteen of the twenty-four key bits carry no secret at all.
This is the equivalent-key effect that~\cite{sweeney2020}
and~\cite{hu2024} describe, here given as a number.

\paragraph{Engine halting points}  The failure mode predicted in
Section~\ref{sec:lockfail} appears exactly as described.  Under point-function
locking the factor width grows with query count, going $2$, $3$, $5$, $10$,
$15$, $19$, $21$, $23$, and passing the time budget at the eighth query,
because each query adds a copy of the comparator tree and every copy touches
every key bit.  Interference-maximizing locking follows at the twelfth query
and random locking survives to the ninety-eighth.

The other half of the prediction is confirmed as well, and it governs what we
can claim.  Run on the same instances and queries, Engine~B reaches three
hundred queries on all three schemes.  On the
point-function lock its accumulated diagram is $163$ nodes at ten queries,
$1{,}026$ at a hundred and $1{,}875$ at three hundred, for five seconds of
work in total, while the version space is still of size $16{,}776{,}916$; on
the interference-maximizing lock the diagram falls to $24$ nodes and stays
there once the key is identified.  Wherever both engines answer in E.1 they
agree.  So where the one-key-per-query rate is observed for seven queries
under Engine~A, Engine~B measures it over three hundred with $|V_t| = 2^{24} - t$
holding exactly throughout.  The two engines are complementary rather than
redundant: Engine~A is inexpensive when the per-query residual circuit is
small and expensive otherwise, while Engine~B is indifferent to that and
expensive instead when the version space itself is large, which is the failure
mode Section~\ref{sec:engines} exhibits on a published benchmark.

\subsection*{E.2 Experiments: Published Benchmarks}

The E.2 experiments take the Trust-Hub obfuscation release~\cite{trusthub} as
their input.

\subsection{Validating the Readers and the Engines}
\label{sec:validation}

On every invocation the harness runs six checks: the reader round trip on all
ten ISCAS-85 circuits in two independent formats; key recovery, meaning that a
locked netlist under its published key reproduces the reference on a thousand
random vectors; Engine~A against brute force; the wide-gate decomposition
against brute force on four hundred cases; Engine~A against Engine~B against
brute force on two hundred and forty cases; and the tightness certificate of
Section~\ref{sec:floor} against exhaustive enumeration of the version space
followed by an exhaustive search over input patterns.  The two readers are
also cross-checked functionally: the ISCAS \texttt{.isc} reader and the
structural Verilog reader parse different files describing the same circuit
and agree on random vectors for c432, c880 and c6288, the last with 6,256
gates against 2,353 and all thirty-two outputs compared.

The structural Verilog reader is checked twice more against tools written by
others.  It is compared against Icarus Verilog~\cite{icarus} on the release
files themselves: each instance is read, random primary-input and key vectors
are drawn, and the two evaluations are compared output by output.  Over the
whole release, $233$ instances agree and none disagrees.  Of the remainder,
$43$ are rejected for combinational feedback, which is the correct outcome and
matches the count of cyclic instances the release advertises, and $19$ are the
malformed files described in Section~\ref{sec:screen}.  Separately, the number
of key ports the reader identifies is compared against the key size each
instance's own documentation declares, and agrees on all $233$.  Nothing in
the campaign rests on a reader that has not been checked against an independent
implementation of the same language.

\subsection{Reachable Instances}
\label{sec:screen}

An instance is screened for reachability before it is used, and the screen runs
twice, because what the two engines can reach differs.  The Engine~A screen
constant-propagates each query, assembles the residual factor graph, and asks
whether a greedy order holds the induced width at or under a threshold; the
Engine~B screen builds the accumulated diagram for the same queries and asks
whether it stays within a node budget.  Neither screen ever computes a count.

\begin{table}[t]
\caption{Reachability of the c432 and c1355 families at three
queries.  $n$ is the number screened, and the A and B columns count
how many of them each engine reaches: Engine A when a greedy order
holds the residual width at or below $25$, Engine B when its diagram
stays inside its node budget.  The two disagree, and in opposite
directions as the key lengthens.}
\label{tab:screen}
\centering
\small
\setlength{\tabcolsep}{4pt}
\begin{tabular}{@{}lrrrrl@{}}
\toprule
family & bits & $n$ & A & B & A widths\\
\midrule
RN, SL, CS & 32 & 6 & 5 & 6 & 8--24\\
 & 64 & 6 & 2 & 4 & 14--17\\
 & 128 & 6 & 1 & 1 & 17--17\\
 & 256 & 3 & 0 & 0 & --\\
AntiSAT & 140--379 & 21 & 6 & 0 & 16--25\\
BDD-based & 28--32 & 3 & 3 & 3 & 6--17\\
\bottomrule
\end{tabular}
\end{table}

Table~\ref{tab:screen} reports the c432 and c1355 families of the published
suite at three queries, with the threshold at $25$ and the node budget at
eight million.  At $k=32$ Engine B reaches every instance, Engine A reaches
five of the six, and the widths that stay under the threshold run from $8$ to
$24$.  Above that the two engines part company, and in both directions:
across the whole screen there are eight instances Engine A reaches and
Engine B does not, and five that Engine B reaches and Engine A does not.
Nine are reached by both and twenty-three by neither.

The direction of the split is the informative part.  Engine~A's width barely
moves with key length, sitting at $16$ to $25$ even on the AntiSAT variants
carrying $140$ to $379$ key bits, as Section~\ref{sec:twowidths} predicts;
what defeats it is the query count.  Engine~B fails on those same instances
because their version spaces remain enormous after three queries.  A screen at
$t=3$ is therefore close to the worst case for Engine~B, whose cost falls as
queries accumulate, and close to the best case for Engine~A, whose cost rises,
so it understates what Engine~B will reach in a full campaign.  That is the
right direction for a screen: it declines instances it would in fact manage
rather than promising ones it would not.  Four of the six $k=64$ instances
screened here appear in Table~\ref{tab:phase6}, and Engine~B reaches all four
at $t=3$, three of them where Engine~A does not.

Two groups are not used, for different reasons.  Cyclic locking, 43 instances
across the release and nine within the two families screened here, introduces
combinational feedback and is outside the scope of a method that assumes an
acyclic netlist.  The synthesized variants are mapped to a standard cell
library and would need that library before they could be read; every instance
of the release also ships an unsynthesized netlist, and that is the file used
here.  The BDD-based instances do appear in Table~\ref{tab:screen}, and both
engines reach all three of those falling within the two families screened.
They were excluded from earlier versions of this work on the ground that they
exposed no key ports, which turned out to be a property of the reader rather
than of the benchmarks: those instances name their key ports
\texttt{keyinput\textit{n}\_\textit{GateName}} rather than
\texttt{keyinput\textit{n}}, a convention their own documentation states.
Section~\ref{sec:bdd} reports the whole family.

\subsection{Failure Modes for Each Engine}
\label{sec:engines}

Before any data existed, the development plan for this work wrote down a
prediction: that some benchmarks would let Engine~A through while Engine~B ran
past its node budget, and that c6288 was the natural place to look for them.
The regime turned out to be real and the circuit turned out to be wrong, and we
report both findings.

\begin{table}[t]
\caption{Engine A against Engine B on published benchmarks, on the
same queries.  The two engines agree exactly wherever both complete.
$t_A$ and $t_B$ are the last query each engine finished; a value
equal to the query budget means the engine never failed.  The width
and diagram columns are taken at $t_A$ and at $t_B$ respectively,
and the diagram is the accumulated constraint, which is the version
space itself.}
\label{tab:engines}
\centering
\small
\setlength{\tabcolsep}{4pt}
\begin{tabular}{@{}lrrrrrr@{}}
\toprule
benchmark & $t_A$ & width & A s & $t_B$ & diagram & B s\\
\midrule
\bench{c1355-RN320} & 4 & 23 & 80.8 & 6 & 30 & 0.0\\
\bench{c1355-SL320} & 6 & 18 & 3.6 & 6 & 165 & 0.0\\
\bench{c432-RN640} & 3 & 26 & 568.7 & 2 & 3,661,940 & 9.7\\
\bench{c432-SL320} & 4 & 24 & 84.8 & 6 & 64 & 0.0\\
\bench{c880-RN320} & 6 & 14 & 0.4 & 6 & 35 & 0.0\\
\bottomrule
\end{tabular}
\end{table}

On four of the five benchmarks in Table~\ref{tab:engines}, Engine B is ahead
at every query count and the margin widens with $t$, and on two of them
Engine A stops finishing before the sixth query.  The fifth column is the
reason: it is the size of the accumulated constraint, which is the version
space, and on those four it stays between $30$ and $425$ nodes at well under a
tenth of a second, while Engine A's running time rises by three orders of
magnitude and on two of them exhausts its four-minute budget.

The fifth benchmark, \bench{c432-RN640}, runs the other way.  There Engine A
answers the first query in a hundredth of a second at width $5$ and the
second in two seconds at width $17$, while Engine B's accumulated diagram is
already at $152{,}097$ nodes and then $3{,}661{,}940$, taking ten seconds.
The predicted regime therefore exists and is not hypothetical.  What the
prediction got wrong is only the circuit.  At the third query Engine B
exceeds its node budget while Engine A finishes at width $26$ in nine and a
half minutes, so on this instance Engine A does extend the reachable range
and the two engines fail at opposite ends exactly as predicted.

The qualification in Section~\ref{sec:engineb} is the whole of
\bench{c432-RN640}: Engine~B's diagram shrinks with $t$ only once the version
space is small enough to represent, and while $|V_t|$ is still above $2^{52}$
that trend has not begun.  Where it has begun, the crossover is not at some
large $t$ but at the first query.

The practical consequence revises the emphasis of
Section~\ref{sec:twowidths} without contradicting it.  Retaining the internal
signals is what makes the problem tractable at all, and that argument stands;
but the object to represent is the version space rather than the constraint
system, and the cost parameter that matters in practice is the diagram size of
$V_t$ rather than any width of the query system.  That $V_t$ should have a
small representation is not a coincidence, since it is the object the method
exists to measure.

\subsection{The Value of a Query}
\label{sec:queryworth}

Queries do not all carry the same information, and here too the spread across
instances dwarfs the spread across schemes.  For each of twenty random input
patterns we record the fraction of a two thousand key sample the query removes,
and take the median.  On seven of the eight $32$-bit instances measured that
median falls between $0.868$ and $1.000$, and not one pattern removes nothing.
The eighth is the exception: on \bench{c1355-CS320} the median is $0.189$, and
\emph{ten of twenty random inputs eliminate no key at all}.

That this is the instance and not the scheme is settled by the other cone-size
instance, \bench{c432-CS320}, whose median is $0.934$ with no worthless pattern
among the twenty.  Instances of the first kind therefore exist in the published
suite, and an iteration count, which registers a query either way, cannot tell a
query worth a bit from a query worth nothing.

\subsection{Advertised Key Length Versus Surviving Entropy}
\label{sec:phase6}

This is the measurement the paper is built around.  For each instance random
queries are drawn until the count holds steady for eight queries running or a
budget of one hundred and twenty queries is spent, and the surviving entropy
$\log_2|V_T|$ is what is reported.

The campaign covers seventy instances, reaching across all ten ISCAS-85
circuits in the release at advertised key lengths of $32$, $64$, $128$ and
$256$ bits.  Fifty-seven of them plateau.  Of the rest, four use up the query
budget with the count still dropping and nine stop because an engine gave out,
and all thirteen are read as weaker bounds than the others.
Table~\ref{tab:phase6} collects the instances by circuit and key length and
records the weak rows in each group, and Table~\ref{tab:phase6full} lays the
same campaign out one instance at a time.

\begin{table}[t]
\caption{Surviving key entropy on the published Trust-Hub
obfuscation release, summarized by circuit and advertised key length.
Each row aggregates the random-insertion, secure-logic-locking and
cone-size schemes at that key length.  Ranges rather than means,
because the spread across schemes is part of the result.  Every
figure is an upper bound on the surviving secret; the last column
counts rows within the group that had not plateaued when their budget
ran out and are therefore weaker bounds still.  The full
per-instance table is in the extended version.}
\label{tab:phase6}
\centering
\small
\setlength{\tabcolsep}{4pt}
\begin{tabular}{lrrrrr}
\toprule
circuit & $k$ & $n$ & queries & bits lost & soft\\
\midrule
\texttt{c432} & 32 & 3 & 41--120 & 13.0--31.0 & 2\\
\texttt{c432} & 64 & 2 & 9--120 & 11.2--16.9 & 2\\
\midrule
\texttt{c499} & 32 & 3 & 13--27 & 8.0--23.3 & 0\\
\texttt{c499} & 64 & 3 & 15--31 & 21.0--41.0 & 0\\
\texttt{c499} & 128 & 1 & 63 & 24.7 & 0\\
\midrule
\texttt{c880} & 32 & 2 & 33--36 & 21.0--28.0 & 0\\
\texttt{c880} & 64 & 2 & 38--51 & 44.2--55.5 & 0\\
\midrule
\texttt{c1355} & 32 & 3 & 25--54 & 17.3--29.0 & 0\\
\texttt{c1355} & 64 & 2 & 33--73 & 23.2--55.4 & 0\\
\midrule
\texttt{c1908} & 32 & 3 & 18--37 & 8.0--25.0 & 0\\
\texttt{c1908} & 64 & 3 & 16--48 & 4.7--50.0 & 1\\
\texttt{c1908} & 128 & 2 & 1--72 & 8.3--81.0 & 1\\
\midrule
\texttt{c2670} & 32 & 3 & 25--32 & 12.1--17.4 & 0\\
\texttt{c2670} & 64 & 3 & 24--120 & 14.2--42.4 & 1\\
\texttt{c2670} & 128 & 2 & 3--37 & 31.5--63.8 & 1\\
\midrule
\texttt{c3540} & 32 & 3 & 17--46 & 11.0--27.0 & 0\\
\texttt{c3540} & 64 & 3 & 21--39 & 18.7--52.4 & 0\\
\texttt{c3540} & 128 & 2 & 61--72 & 83.4--91.5 & 0\\
\midrule
\texttt{c5315} & 32 & 3 & 23--30 & 16.0--31.0 & 0\\
\texttt{c5315} & 64 & 3 & 24--35 & 23.1--54.0 & 0\\
\texttt{c5315} & 128 & 3 & 1--56 & 8.5--114.1 & 2\\
\texttt{c5315} & 256 & 1 & 31 & 152.1 & 1\\
\midrule
\texttt{c6288} & 32 & 3 & 11--20 & 26.3--32.0 & 0\\
\texttt{c6288} & 64 & 3 & 23--42 & 39.2--64.0 & 0\\
\texttt{c6288} & 128 & 1 & 49 & 107.3 & 0\\
\midrule
\texttt{c7552} & 32 & 3 & 24--44 & 23.1--27.4 & 0\\
\texttt{c7552} & 64 & 3 & 20--30 & 48.1--55.0 & 0\\
\texttt{c7552} & 128 & 2 & 1--2 & 25.0--43.3 & 2\\
\bottomrule
\end{tabular}
\end{table}

\begin{table*}[t]
\caption{Surviving key entropy on published Trust-Hub obfuscation
benchmarks.  RN is random key-gate insertion, SL secure logic
locking, CS logic cone size based.  Every figure is an upper bound on
the surviving secret.  A dagger marks a row whose count was still
falling when the budget of 120 queries ran out; a double dagger a row
on which the engine gave up before that.  Both are weaker bounds than
the rest.  The full list of seventy instances is split into two panels,
read left panel first and then right; each row carries its own
advertised key length.}
\label{tab:phase6full}
\centering
\footnotesize
\setlength{\tabcolsep}{5pt}
\begin{minipage}[t]{0.48\textwidth}
\centering
\begin{tabular}{lrrrr}
\toprule
benchmark & adv.\ $k$ & queries & $\log_2|V_T|$ & lost\\
\midrule
\texttt{c5315-RN2560}$^{\ddagger}$ & 256 & 31 & 103.91 & 152.09\\
\midrule
\texttt{c5315-RN1280} & 128 & 56 & 13.91 & 114.09\\
\texttt{c6288-SL1280} & 128 & 49 & 20.67 & 107.33\\
\texttt{c3540-RN1280} & 128 & 61 & 36.51 & 91.49\\
\texttt{c3540-SL1280} & 128 & 72 & 44.64 & 83.36\\
\texttt{c1908-CS1280} & 128 & 72 & 47.01 & 80.99\\
\texttt{c2670-RN1280} & 128 & 37 & 64.24 & 63.76\\
\texttt{c7552-SL1280}$^{\ddagger}$ & 128 & 2 & 84.72 & 43.28\\
\texttt{c2670-CS1280}$^{\ddagger}$ & 128 & 3 & 96.54 & 31.46\\
\texttt{c5315-SL1280}$^{\ddagger}$ & 128 & 19 & 98.80 & 29.20\\
\texttt{c7552-RN1280}$^{\ddagger}$ & 128 & 1 & 103.03 & 24.97\\
\texttt{c499-CS1280} & 128 & 63 & 103.28 & 24.72\\
\texttt{c5315-CS1280}$^{\ddagger}$ & 128 & 1 & 119.46 & 8.54\\
\texttt{c1908-SL1280}$^{\ddagger}$ & 128 & 1 & 119.65 & 8.35\\
\midrule
\texttt{c6288-RN640} & 64 & 23 & 0.00 & 64.00\\
\texttt{c880-RN640} & 64 & 38 & 8.49 & 55.51\\
\texttt{c1355-RN640} & 64 & 33 & 8.59 & 55.41\\
\texttt{c6288-SL640} & 64 & 35 & 8.59 & 55.41\\
\texttt{c7552-RN640} & 64 & 30 & 9.00 & 55.00\\
\texttt{c5315-CS640} & 64 & 24 & 10.00 & 54.00\\
\texttt{c7552-CS640} & 64 & 26 & 10.00 & 54.00\\
\texttt{c5315-RN640} & 64 & 35 & 11.00 & 53.00\\
\texttt{c3540-RN640} & 64 & 25 & 11.59 & 52.41\\
\texttt{c1908-RN640} & 64 & 37 & 14.00 & 50.00\\
\texttt{c7552-SL640} & 64 & 20 & 15.94 & 48.06\\
\texttt{c3540-CS640} & 64 & 21 & 17.59 & 46.41\\
\texttt{c880-SL640} & 64 & 51 & 19.81 & 44.19\\
\texttt{c2670-RN640} & 64 & 24 & 21.59 & 42.41\\
\texttt{c499-RN640} & 64 & 31 & 23.00 & 41.00\\
\texttt{c6288-CS640} & 64 & 42 & 24.84 & 39.16\\
\texttt{c1908-SL640} & 64 & 48 & 27.84 & 36.16\\
\texttt{c2670-CS640} & 64 & 35 & 32.24 & 31.76\\
\texttt{c1355-SL640} & 64 & 73 & 40.84 & 23.16\\
\texttt{c5315-SL640} & 64 & 33 & 40.86 & 23.14\\
\texttt{c499-CS640} & 64 & 15 & 42.81 & 21.19\\
\bottomrule
\end{tabular}
\end{minipage}\hfill
\begin{minipage}[t]{0.48\textwidth}
\centering
\begin{tabular}{lrrrr}
\toprule
benchmark & adv.\ $k$ & queries & $\log_2|V_T|$ & lost\\
\midrule
\texttt{c499-SL640} & 64 & 17 & 43.00 & 21.00\\
\texttt{c3540-SL640} & 64 & 39 & 45.28 & 18.72\\
\texttt{c432-RN640}$^{\ddagger}$ & 64 & 9 & 47.14 & 16.86\\
\texttt{c2670-SL640}$^{\dagger}$ & 64 & 120 & 49.81 & 14.19\\
\texttt{c432-SL640}$^{\dagger}$ & 64 & 120 & 52.77 & 11.23\\
\texttt{c1908-CS640}$^{\ddagger}$ & 64 & 16 & 59.35 & 4.65\\
\midrule
\texttt{c6288-RN320} & 32 & 11 & 0.00 & 32.00\\
\texttt{c432-SL320} & 32 & 41 & 1.00 & 31.00\\
\texttt{c5315-CS320} & 32 & 23 & 1.00 & 31.00\\
\texttt{c5315-RN320} & 32 & 26 & 1.00 & 31.00\\
\texttt{c6288-SL320} & 32 & 16 & 1.00 & 31.00\\
\texttt{c1355-RN320} & 32 & 54 & 3.00 & 29.00\\
\texttt{c880-RN320} & 32 & 33 & 4.00 & 28.00\\
\texttt{c7552-CS320} & 32 & 24 & 4.58 & 27.41\\
\texttt{c7552-RN320} & 32 & 44 & 4.58 & 27.41\\
\texttt{c3540-RN320} & 32 & 46 & 5.00 & 27.00\\
\texttt{c6288-CS320} & 32 & 20 & 5.70 & 26.30\\
\texttt{c1908-RN320} & 32 & 18 & 7.00 & 25.00\\
\texttt{c3540-SL320} & 32 & 30 & 8.59 & 23.41\\
\texttt{c499-RN320} & 32 & 27 & 8.70 & 23.30\\
\texttt{c7552-SL320} & 32 & 24 & 8.91 & 23.09\\
\texttt{c1355-SL320} & 32 & 44 & 10.69 & 21.31\\
\texttt{c880-SL320} & 32 & 36 & 11.00 & 21.00\\
\texttt{c2670-CS320} & 32 & 29 & 14.57 & 17.43\\
\texttt{c1355-CS320} & 32 & 25 & 14.68 & 17.32\\
\texttt{c2670-RN320} & 32 & 25 & 15.81 & 16.19\\
\texttt{c5315-SL320} & 32 & 30 & 16.00 & 16.00\\
\texttt{c499-SL320} & 32 & 13 & 18.00 & 14.00\\
\texttt{c1908-CS320} & 32 & 37 & 18.63 & 13.37\\
\texttt{c432-CS320}$^{\dagger}$ & 32 & 120 & 18.77 & 13.23\\
\texttt{c432-RN320}$^{\dagger}$ & 32 & 120 & 19.02 & 12.98\\
\texttt{c2670-SL320} & 32 & 32 & 19.92 & 12.08\\
\texttt{c3540-CS320} & 32 & 17 & 21.00 & 11.00\\
\texttt{c1908-SL320} & 32 & 18 & 23.98 & 8.02\\
\texttt{c499-CS320} & 32 & 24 & 24.00 & 8.00\\
\bottomrule
\end{tabular}
\end{minipage}
\end{table*}

\begin{figure}[t]
\centering
\includegraphics[width=0.86\columnwidth]{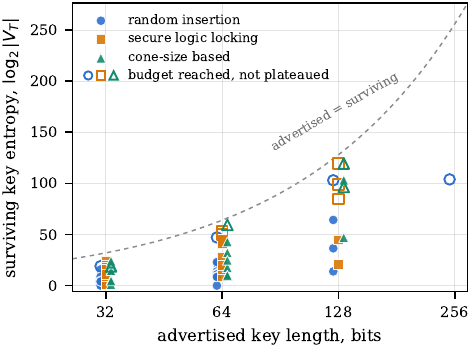}
\caption{Advertised key length against surviving entropy, seventy published
instances over all ten ISCAS-85 circuits.  The dashed curve is
$\log_2|V_T| = k$, which the horizontal axis renders as a curve because it is
logarithmic.  Every instance lies below it, most far below, and two sit on the
axis at zero because their keys are uniquely determined.  Filled markers
plateaued under random queries.  Open markers did not, so their true surviving
entropy is lower than plotted.  The three schemes are offset horizontally
within each key length so that they do not overlap.}
\label{fig:e2surv}
\end{figure}

\begin{figure}[t]
\centering
\includegraphics[width=0.92\columnwidth]{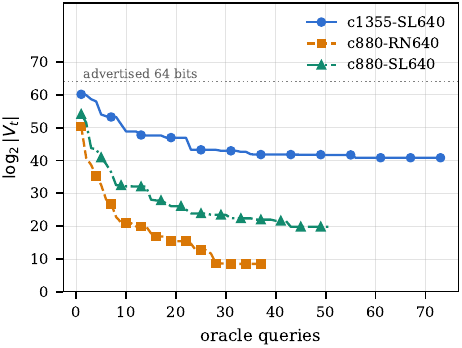}
\caption{Exact version-space size against oracle queries for three 64-bit
instances that plateaued.  The dotted line is the advertised key length.  Each
curve falls below it on the first query and flattens well short of it.}
\label{fig:e2traj}
\end{figure}

Fig.~\ref{fig:e2surv} plots the result and Fig.~\ref{fig:e2traj} shows three
trajectories behind it.  Every instance loses key material, and the loss runs
from $4.65$ to $152.09$ bits.  \bench{c5315-RN2560} advertises a $256$-bit key and $31$ random queries
remove $152$ bits of it; \bench{c5315-RN1280} advertises $128$ and retains
$13.91$; \bench{c432-SL320} advertises $32$ and retains one bit, which is two
keys.  Two instances retain nothing at all: \bench{c6288-RN320} is uniquely
determined by eleven random queries and \bench{c6288-RN640} by twenty-three.
In both cases $|V_T| = 1$ and the advertised $32$ and $64$ bit keys are
fully recovered by uniform random querying, without a solver and without a
chosen input pattern.

That the circuit is c6288 in both cases is worth a comment, because c6288 is
the canonical circuit whose decision diagram cannot be built: it is a
$16\times16$ array multiplier, and the exponential lower bound on any ordered
diagram for a middle output bit of an integer multiplier is a
theorem~\cite{bryant1991} holding for every variable order, so one would
expect Engine~B to fail here first.  It does not, because the two diagrams are
not the same object.  Bryant's bound is on a diagram of the circuit's
\emph{function} over the primary inputs, whereas Engine~B fixes the primary
inputs to a query pattern and builds a diagram over the \emph{key} variables
whose satisfying assignments are the surviving keys.  Nothing in the
multiplier bound constrains that, and in practice Engine~B carries every
locked c6288 variant it was given with an accumulated diagram running from
$33$ nodes to $3{,}593$ over the seven of them, while Engine~A fails on all of
them.

Two broader patterns are visible in Table~\ref{tab:phase6}.  The loss grows
with the advertised key length rather than shrinking, so a longer key buys
more surviving entropy in absolute terms while the fraction lost does not
fall, and at $128$ bits the losses are the largest in the campaign.  And the
circuit matters more than the scheme: c880, c1355, c5315 and c6288 lose most
of their key material under random insertion and secure logic locking alike,
while c432 holds far more under both.  That is a statement about circuit
structure rather than locking method, and it is the kind a count can make and
an iteration count cannot.

\paragraph{These results are upper bounds}  Three qualifications bear on the
figures, all pushing the same way and none of them working in the advertised
key's favour: the queries are drawn uniformly at random and an attacker who
chose them would get there faster, the stopping rule is a plateau under random
queries and not a proof that no further query splits the survivors, and any
further query can only shrink $V$.  So $\log_2|V_T|$ is an upper bound on the
surviving secret, and the campaign supports ``at most 8.49 bits remain'' where
it does not support ``exactly 8.49''.  On the thirteen rows that never
plateaued the bound is looser still.  The exception is the two instances at
$|V_T| = 1$, where a version space of a single key has nothing left to lose.

\subsection{Reading the Campaign by Circuit Family}
\label{sec:byfamily}

Table~\ref{tab:phase6full} repays a reading by circuit rather than by scheme,
because it is the circuit, more than the construction placed on it, that
decides how much of the key survives.  That is itself one of the campaign's
findings, and it inverts the usual order of the defender's attention: a lock
chosen without regard to the circuit it protects has been chosen on the less
important of the two variables.  What follows walks the circuits from the one
that gives up everything to the one that holds the most, so that the range the
grouped table compresses into intervals can be seen instance by instance.

The array multiplier c6288 is the circuit that gives up everything.  Both of
the uniquely-determined instances in the campaign are locked c6288 circuits:
\bench{c6288-RN320} collapses to a single surviving key after eleven random
queries and \bench{c6288-RN640} after twenty-three, and its $128$-bit instance
\bench{c6288-SL1280} sheds $107.33$ bits, the largest single loss of any
$128$-bit row not on c5315.  That the most arithmetically complex circuit in
the suite is the easiest to strip is the opposite of what a decision-diagram
intuition would predict, and Section~\ref{sec:floor} explains why the diagram
bound that would forbid it does not touch the object Engine~B actually builds.
Its arithmetic density, which makes the multiplier hard to represent as a
function of its inputs, does nothing to keep the key secret once the inputs are
fixed.

The wide, output-heavy circuits sit next to it.  On c5315 the losses are the
largest in the whole campaign: \bench{c5315-RN2560} advertises a $256$-bit key
and thirty-one random queries carry off $152.09$ bits of it, and
\bench{c5315-RN1280} retains only $13.91$ of an advertised $128$.  On c880 a
$64$-bit key is worth little: \bench{c880-RN640} keeps $8.49$ bits and
\bench{c880-SL640} keeps $19.81$, so secure logic locking buys roughly eleven
bits over random insertion on the same circuit and key length and still leaves
most of the key gone.  c1355, c7552 and the mid-size circuits c1908, c2670 and
c3540 fall in the same band, losing the bulk of a $64$-bit key under random
queries alone, with c1355 dropping to $8.59$ bits at $64$ and to $3.00$ at
$32$.  Across all of these the scheme moves the number by a handful of bits
while the circuit moves it by tens.

c432 is the circuit that holds the most, and it is also where the counting is
hardest, which is not a coincidence.  Its $64$-bit rows retain more than any
other circuit's, and its counts are the slowest to settle: the instances that
run out of query budget with the count still falling are concentrated on c432,
so \bench{c432-SL640} at $52.77$ bits and \bench{c432-CS320} at $18.77$ are
upper bounds looser than the rest, and Section~\ref{sec:floor} re-runs two of
them to show the plateau lies well below where the short budget left it.  The
one c432 instance the method cannot count at all, \bench{c432-RN640}, is also
here, for the same underlying reason: a version space that stays large is both
a strong lock and an expensive diagram.  The exception within the family is
\bench{c432-SL320}, which secure logic locking strips to a single surviving
bit, which is two keys, and which shows that even the most resistant circuit is
not uniformly so.

Two patterns hold across the walk.  Reading down the key-length column the
absolute loss grows rather than shrinks, so a longer key buys more surviving
entropy in absolute terms while the fraction lost does not fall and the
campaign's largest losses sit at its longest keys.  Reading across schemes at a
fixed circuit and key length the spread is narrow beside the spread across
circuits, which is the quantitative form of the finding that the circuit
matters more than the scheme.  The per-instance table is included so that both
patterns can be checked row by row rather than taken from the grouped summary
of Table~\ref{tab:phase6} on trust, and the grouped table reports intervals
precisely because the spread inside a group is part of what the measurement
found.

\subsection{The BDD-Based Family}
\label{sec:bdd}

The release carries a fourth family built by manipulating decision diagrams
rather than by inserting key gates.  It has three variants, written BE, BR and
BS in the instance names, covering twenty-four instances over six circuits at
$28$, $32$, $64$ and $128$ key bits.  We report it separately from
Table~\ref{tab:phase6} because the variants behave so differently under the
same measurement that any average over them would describe none of them.

\begin{table}[t]
\caption{The BDD-based obfuscation family.  The width and diagram
columns are the screen at three queries; the last three are the
campaign run to a plateau.  Both engines reach every instance but
one.  The three variants separate: BE loses most of the key, BR part
of it, and BS none of it.}
\label{tab:bdd}
\centering
\small
\setlength{\tabcolsep}{3.5pt}
\begin{tabular}{@{}lrrrrrr@{}}
\toprule
instance & bits & width & diagram & $t$ & $\log_2|V_t|$ & lost\\
\midrule
\bench{c1908-BE320} & 32 & 7 & 7,557 & 62 & 4.58 & 27.41\\
\bench{c3540-BE320} & 32 & 6 & 650 & 53 & 1.58 & 30.41\\
\bench{c432-BE280} & 28 & 6 & 410 & 53 & 3.00 & 25.00\\
\bench{c5315-BE320} & 32 & 6 & 1,049 & 49 & 2.00 & 30.00\\
\bench{c7552-BE1280} & 128 & 7 & 195 & 121 & 2.00 & 126.00\\
\bench{c7552-BE320} & 32 & 7 & 43 & 54 & 0.00 & 32.00\\
\bench{c7552-BE640} & 64 & 7 & 105 & 48 & 1.58 & 62.41\\
\bench{c880-BE320} & 32 & 7 & 2,533 & 57 & 2.00 & 30.00\\
\midrule
\bench{c1908-BR320} & 32 & 3 & 1,150 & 86 & 29.03 & 2.97\\
\bench{c3540-BR320} & 32 & 6 & 980 & 326 & 17.41 & 14.59\\
\bench{c432-BR320} & 32 & 17 & 0 & 41 & 32.00 & 0.00\\
\bench{c5315-BR320} & 32 & 6 & 40 & 98 & 24.66 & 7.34\\
\bench{c7552-BR1280} & 128 & 5 & 137 & 388 & 79.20 & 48.80\\
\bench{c7552-BR320} & 32 & 6 & 37 & 115 & 19.13 & 12.87\\
\bench{c7552-BR640} & 64 & 6 & 70 & 148 & 40.00 & 24.00\\
\bench{c880-BR320} & 32 & 5 & 2,998 & 157 & 21.08 & 10.92\\
\midrule
\bench{c1908-BS320} & 32 & 5 & 0 & 41 & 32.00 & 0.00\\
\bench{c3540-BS320} & 32 & 6 & 0 & 41 & 32.00 & 0.00\\
\bench{c432-BS320} & 32 & 17 & 0 & 41 & 32.00 & 0.00\\
\bench{c5315-BS320} & 32 & 5 & 0 & 41 & 32.00 & 0.00\\
\bench{c7552-BS1280} & 128 & $>25$ & 1 & 41 & 128.00 & 0.00\\
\bench{c7552-BS320} & 32 & 6 & 1 & 41 & 32.00 & 0.00\\
\bench{c7552-BS640} & 64 & 6 & 1 & 41 & 64.00 & 0.00\\
\bench{c880-BS320} & 32 & 6 & 0 & 41 & 32.00 & 0.00\\
\bottomrule
\end{tabular}
\end{table}

Table~\ref{tab:bdd} reports the whole family.  The first two numeric columns
are the screen at three queries: Engine B reaches every instance and Engine A
reaches twenty-three of the twenty-four, at widths running from $3$ to $17$,
which is below anything the other families reach at comparable key lengths.
That makes this family the counterexample to any reading of
Section~\ref{sec:engines} in which Engine A is simply the weaker of the two.
The gate counts, which run from $5{,}774$ to $190{,}305$, are an order of
magnitude above the rest of the release because these netlists are emitted as
flattened Boolean assignments rather than gate instances, and the reader
counts one operation per gate.

The last three columns are the campaign, and all twenty-four plateau.  The BE
variants lose from $25.00$ to $126.00$ bits, with \bench{c7552-BE320} losing
its whole $32$-bit key.  The BR variants lose from $2.97$ to $48.80$ bits, and
the amount varies by circuit rather than by key length.  The BS variants lose
nothing at all: the count stands at $2^{k}$ after every query drawn, at
$k=32$, $64$ and $128$ alike, and the accumulated diagram never exceeds one
node.

One row needs an explanation that has nothing to do with locking.
\bench{c432-BR320} sits with the BS variants rather than with the other BR
ones because the release ships it and \bench{c432-BS320} as byte-identical
files; hashing every unsynthesized netlist gives $294$ distinct files across
$295$ instances, and that pair is the only duplicate, so the BR variant has
seven independent instances here and not eight.

A zero deserves a check, and it survives four.  The key ports are read in the
body of the netlist rather than merely declared; they reach primary outputs
along a path through the gate graph; the reader agrees with an independent
simulator on these instances, so this is not a parse artifact; and a
satisfiability query over two copies of the netlist that share the primary
inputs and carry independent keys is satisfiable, which exhibits an input on
which two keys disagree.  The key is therefore not inert.  What the zero says
is that the set of input patterns on which the key changes an output is small
enough that random draws do not find one, which is a property the
surviving-entropy measurement is built to see and an iteration count is not.
It is also the sharpest illustration in this paper of why every number
reported so far is an upper bound, and Section~\ref{sec:adversarial} measures
the same instances under queries that are chosen rather than drawn.

\subsection{The Floor and Bound Tightness}
\label{sec:floor}

An upper bound is only as useful as one's grasp of what it bounds, and here the
answer is exact.  Write
\[
  E(K^*) \;=\; \{\,K : C(x,K) = C(x,K^*) \text{ for every input } x\,\}
\]
for the set of keys functionally equivalent to the correct one $K^*$.

\begin{proposition}\label{prop:floor}
$|V_t|$ is non-increasing in $t$ and bounded below by $|E(K^*)| \geq 1$.  It is
therefore eventually constant, and under queries drawn uniformly at random it
converges almost surely to exactly $|E(K^*)|$.
\end{proposition}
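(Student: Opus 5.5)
Monotonicity is immediate from the definition of $V_t$. It is the set of keys satisfying the conjunction of the first $t$ query constraints, so $V_{t+1}=V_t\cap\{K : C(x_{t+1},K)=y_{t+1}\}\subseteq V_t$. For the lower bound, every query is answered by the oracle, which computes $C(\cdot,K^*)$, so $y_i=C(x_i,K^*)$. Any $K\in E(K^*)$ therefore agrees with every observed response, giving $E(K^*)\subseteq V_t$ for all $t$. Since $K^*\in E(K^*)$, we get $|E(K^*)|\ge 1$. A non-increasing sequence of integers bounded below by $1$ (and above by $2^k$) is eventually constant. This part needs nothing beyond the definitions and holds for any query sequence, random or chosen.

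For almost-sure convergence to $|E(K^*)|$ under uniform random queries, fix a key $K\notin E(K^*)$. By definition there is at least one input $x_K$ with $C(x_K,K)\ne C(x_K,K^*)$. A uniformly drawn query hits such a distinguishing input with probability at least $p_K\ge 2^{-n}$, where $n$ is the number of primary inputs. Once it does, $K$ is removed and never returns, by monotonicity. Assuming queries are drawn independently, the probability that $K$ survives $t$ queries is at most $(1-2^{-n})^t\to 0$. There are at most $2^k$ such keys, so a union bound gives
\[
\Pr\bigl[V_t\neq E(K^*)\bigr]\;\le\;2^{k}\,(1-2^{-n})^{t}.
\]
The events $\{V_t\ne E(K^*)\}$ are decreasing in $t$, since $V_t=E(K^*)$ persists once reached. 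So $\Pr[\exists\,t:\ V_s=E(K^*)\ \forall s\ge t]=1$. Alternatively, the bound is summable in $t$, and Borel--Cantelli gives the same conclusion. Since $V_t$ is eventually constant and $E(K^*)\subseteq V_t$, the limiting value is exactly $|E(K^*)|$ almost surely.

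The main point needing care is the probabilistic model rather than any estimate. Queries must be i.i.d.\ uniform over $\{0,1\}^n$, or at least each distinguishing input must retain a positive probability bounded away from zero at every step, and the key space and input space must be finite. Under those conditions the argument is the union bound above, and I expect no real obstacle. I would state explicitly that the convergence may be arbitrarily slow when $p_K$ is tiny, as with point-function locks or the BS instances of Section~\ref{sec:bdd}. This is why a plateau under random queries certifies the floor only after the satisfiability check of Section~\ref{sec:threat} confirms that no input separates the survivors.
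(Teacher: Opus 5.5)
Your proof is correct. Its first half is exactly the paper's proof: queries only remove keys, no key of $E(K^*)$ can be removed because it matches the oracle on every input, and a non-increasing sequence of positive integers settles.

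The paper stops there and asserts that the plateau is $|E(K^*)|$. That line alone only shows the limit exists and is at least $|E(K^*)|$. It also does not use the randomness of the queries at all: a single query repeated forever also plateaus, typically well above the floor.

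Your second half is what actually earns ``almost surely'' and ``exactly.'' Each $K\notin E(K^*)$ has a distinguishing input, hit with probability at least $2^{-n}$ on each independent draw. A union bound over at most $2^k$ keys gives $\Pr[V_t\neq E(K^*)]\le 2^k(1-2^{-n})^t$. Because these events are nested, this upgrades to $V_t=E(K^*)$ for all large $t$ with probability one.

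Beyond closing that step, your route gives two things the paper's one-liner does not. It states the hypotheses under which the claim holds: independent draws, or at least draws that give every distinguishing input a conditional probability bounded away from zero, over a finite input space. It also gives a tail bound; with $2^{-n}$ replaced by $\min_K p_K$, that bound quantifies how slow convergence can be. This is the same phenomenon the paper later meets on the BS variants, where random queries never move the count, and it is why the separate no-distinguishing-input certificate is needed.
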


The proof is a line.  A query only ever removes keys; no query can remove a key
of $E(K^*)$, because such a key matches the oracle on every input; and a
non-increasing sequence of positive integers must settle.  So every instance
plateaus, and the plateau is not a creature of the stopping rule but the size of
the equivalent-key class, which is exactly what remains of the secret once no
observation can separate what is left.  It also meets an objection the
definitional line raises~\cite{beerel2022}, that a count of keys is not a count
of distinguishable functions: the two differ by precisely the equivalent-key
class, so quoting the floor beside the count is what holds them apart.  Two
consequences bear on how Table~\ref{tab:phase6} should be read.

\paragraph{A plateau that has not appeared is a budget, not a barrier}  Four
instances exhausted the query budget with their counts still falling, three of
them on c432, and their tails are very flat: over their last forty queries
\bench{c432-CS320} fell $0.056$ bits and \bench{c432-RN320} fell $0.270$, so a
rule that stops after eight unchanged counts can essentially never fire, since
a small drop keeps landing inside every window.  We re-ran five instances with
a budget of three thousand queries and a stricter plateau rule.  Four of the
five then plateau, two of them far below where the shorter budget left them:
\bench{c432-RN320} reaches $11.03$ bits after $630$ queries against the
$19.02$ reported at $120$, and \bench{c432-CS320} reaches $3.00$ after
$1{,}233$ against $18.77$, nearly sixteen further bits on an instance the
shorter budget made look resilient.  The limit was the budget and not the
structure.

\paragraph{The bound can be certified tight}  $V_t = E(K^*)$ exactly when no
input pattern separates two surviving keys, and for one output that is
decidable by quantifying the key variables out of the joint diagram.  With
$A_o(x) = \exists K\,[\,f_o(x,K) \wedge V_t(K)\,]$ and
$B_o(x) = \exists K\,[\,\neg f_o(x,K) \wedge V_t(K)\,]$, a distinguishing
pattern exists precisely when $A_o \wedge B_o$ is satisfiable for some $o$;
when it is not, $\log_2|V_t|$ is exact rather than an upper bound.  That is
the same condition a satisfiability-based attack tests when it looks for its
next distinguishing pattern, so the attack's termination condition and the
tightness certificate are one statement.  We check it against exhaustive
search on generated instances, where the version space can be enumerated and
every input pattern tried, and it agrees in every case.  Where $|V_t| = 1$ it
is immediate, since one key cannot be separated from itself, and that is how
the two uniquely-determined c6288 rows are certified: their $32$ and $64$ bit
keys are recovered \emph{exactly} rather than to within an upper bound.

\paragraph{This form of the certificate has a barrier the counting does not}
As formulated above the check needs the joint diagram over inputs and keys,
which is a diagram of the circuit's function, so the multiplier bound
of~\cite{bryant1991} applies to it even though Section~\ref{sec:phase6} showed
it does not apply to Engine~B.  The consequence is stronger than a budget: on
c6288 the joint diagram cannot be built at any budget for any variable order,
so had those two instances retained even two keys they would have been
permanently uncertifiable this way, and the certificates we do have there rest
entirely on the one-key rule.  The barrier is specific to the diagram
formulation, and Section~\ref{sec:adversarial} certifies through a solver
instead, which removes it outright.

\paragraph{Failure cases}  We report one instance,
\bench{c432-RN640}, as a failure rather than dropping it.  At a node budget of
eight million the engine gives up after two queries, and raising the budget to
a hundred million buys seven more before fifteen minutes of wall time run out,
with the accumulated diagram at 44.6 million nodes and the count still above
$2^{47}$.  This is the failure mode Section~\ref{sec:engineb} implies: the
diagram represents the version space, so it is large precisely when the
version space is large, which is the regime in which a scheme is doing its
job.  The engine fails where the answer would be least welcome to the
attacker, so a scheme that survives here has survived measurement rather than
been shown secure.  It is the one direction in which the method cannot be
trusted to be conservative.

\subsection{Chosen Queries Versus Random Queries}
\label{sec:adversarial}

Every entropy figure so far rests on queries drawn uniformly at random, which
is the reason each one is stated as an upper bound.  What separates that bound
from what a chooser of queries would leave behind has until now been an
assumption; this section puts a number on it.

The tool is the satisfiability question of Section~\ref{sec:floor}, turned from
a termination test into a generator of queries.  Two copies of the netlist share
the primary inputs of a candidate query and carry independent key vectors, every
query already asked is asserted of both, and the objective demands that the two
copies differ on some primary output.  A model of that formula is an input
pattern on which two surviving keys disagree, which is exactly the distinguishing
input a satisfiability-based attack would compute; it becomes the next query, and
the loop turns again.  When the formula is unsatisfiable no input separates two
surviving keys, so $V_t = E(K^*)$ and the entropy reported for that instance is
exact rather than bounded.

\begin{table}[t]
\caption{Random queries against solver-chosen queries.  The two
middle columns are measured at the same query count $t$, so the
only difference between them is how the query was chosen.  The last
column is the same instance under the campaign's own budget of
100 random queries.  A dagger marks an instance where the solver
reported that no distinguishing input remains, which is a proof
that the surviving set is the equivalence class of the key.}
\label{tab:adversarial}
\centering
\small
\setlength{\tabcolsep}{4pt}
\begin{tabular}{lrrrrr}
\toprule
instance & key bits & $t$ & random & chosen & random, 100\\
\midrule
\bench{c1355-RN320} & 32 & 12 & 10.60 & 5.46 & 1.00\\
\bench{c1908-BE320} & 32 & 12 & 9.91 & 1.00 & 2.00\\
\bench{c1908-BS320} & 32 & 1 & 32.00 & 31.91 & 32.00\\
\bench{c432-BE280} & 28 & 8$^\dagger$ & 4.17 & 1.00 & 1.00\\
\bench{c432-BR320} & 32 & 12 & 32.00 & 31.91 & 32.00\\
\bench{c432-BS320} & 32 & 12 & 32.00 & 31.91 & 32.00\\
\bench{c432-RN320} & 32 & 12 & 14.15 & 10.58 & 2.58\\
\bench{c432-SL320} & 32 & 11$^\dagger$ & 13.22 & 0.00 & 1.00\\
\bench{c5315-BE320} & 32 & 7$^\dagger$ & 1.00 & 1.00 & 1.00\\
\bench{c5315-BS320} & 32 & 10$^\dagger$ & 32.00 & 24.44 & 32.00\\
\bench{c880-BE320} & 32 & 9$^\dagger$ & 7.49 & 5.58 & 6.17\\
\bench{c880-BS320} & 32 & 12 & 32.00 & 19.61 & 32.00\\
\bench{c880-RN320} & 32 & 11$^\dagger$ & 3.00 & 2.00 & 2.00\\
\bottomrule
\end{tabular}
\end{table}

Two columns of Table~\ref{tab:adversarial} are measured at the same query
count, so the only difference between them is how the query was chosen.  The
chosen column is at or below the random column on every instance measured.
Per query, choosing is worth something everywhere, and on \bench{c432-SL320}
eleven chosen queries determine the key exactly while eleven random ones leave
$13.22$ bits.

The last column changes the reading.  It is the same instance under a hundred
random queries rather than twelve, and on the key-gate families it falls below
the chosen column: a hundred random queries leave $1.00$ bit of
\bench{c1355-RN320} where twelve chosen ones leave $5.46$.  Random queries are
free and chosen ones are not, so the two columns answer different questions.
Per query, choosing wins; per unit of effort, on an instance where random
patterns separate surviving keys at all, drawing more of them wins.  That
leaves the instances where they do not.  The BS variants of
Section~\ref{sec:bdd} stand at the full key space after a hundred random
queries, and the last column does not overtake the chosen column at any
budget, because the quantity it measures does not move.

The summary is therefore not that one way of choosing a query dominates.  The
two are separated by the same property of the lock that this paper measures.
Where a random input pattern separates surviving keys often, the random-query
bound is close and cheap to obtain; where the construction makes that rare,
the bound is far above the truth, no number of random queries closes it, and
the measurement gives no warning of that on its own.  The certificate does.

\paragraph{The certificate, through a solver}  Six instances in
Table~\ref{tab:adversarial} carry a proof that the surviving set is the
equivalence class of the key.  That is the condition
Section~\ref{sec:floor} states, decided without building a diagram over the
inputs and the keys, so the multiplier bound of~\cite{bryant1991} does not
apply and the barrier that section leaves open is closed.  The sharpest case
is \bench{c5315-BS320}: a hundred random queries leave the full $32$-bit key
space standing.  Ten chosen queries reach $24.44$ bits, and the solver then
reports that no distinguishing input remains, so that figure is exact rather
than a bound.  The key on that instance is worth $7.56$ bits, and random
querying recovers none of them.

\paragraph{Chosen queries are not free}  The cost falls
in the same place as the benefit.  Producing a chosen query means finding a
satisfying assignment for a formula asserting that two surviving keys disagree
somewhere, and a construction that makes such inputs rare is a construction
that makes that formula hard.  On \bench{c5315-BS320} the ten queries cost
three seconds of solver time in total; on \bench{c880-BS320} twelve queries
cost six minutes; on \bench{c1908-BS320} the first query alone cost
thirty-five minutes and the budget allowed no second one.  The advantage of
choosing is real on these locks, and it is bought with solver time rather than
with query count.

\subsection{Engines and Implementation}
\label{sec:impl}

The two engines that compute $|V_t|$ are independent implementations of the two
representations discussed above, and they are kept independent on purpose, so
that agreement between them is evidence rather than a restatement of one
computation.  They are described here in enough detail that either could be
reconstructed, because the reachability of an instance is a property of the
engine as much as of the lock, and a reader weighing a surviving-entropy figure
should know which engine produced it and where that engine breaks.

\paragraph{Engine~A: elimination over the residual factor graph}
Engine~A is the sum--product engine of Section~\ref{sec:cec}, run over the
residual gate-level factor graph rather than over the Haar block tree.  It
begins each query by fixing the primary inputs to the query pattern and
propagating constants forward: a gate all of whose inputs are now determined
becomes a constant and is removed, and a gate with a single surviving symbolic
input collapses to a buffer or an inverter.  What is left is a residual circuit
over the key bits and the internal nets that still depend on them, which for
most designs is a small fraction of the netlist because most of a circuit does
not touch the key.  Each surviving gate contributes the indicator factor
of~\eqref{eq:gatefactor}, the pinned outputs contribute unit factors, and the
count of satisfying assignments of the resulting factor set is $|V_t|$.

Gates of fan-in greater than two are decomposed into two-input primitives
before the graph is built, because the elimination cost is read off a graph
whose factors have bounded scope and a wide gate would otherwise hide a large
factor inside a single vertex; the decomposition is checked against brute force
on four hundred cases so that it cannot silently change the function it
represents.  The elimination order is produced by the greedy minimum-fill
heuristic, which repeatedly removes the variable whose elimination adds the
fewest fill edges to the remaining graph.  We use a heuristic rather than an
optimal order because computing the true treewidth is itself intractable, and a
heuristic order gives an upper bound on the induced width, so the widths the
engine reports are conservative: the true factor width is no larger than the
number printed.  That number is the induced width of the order actually used,
which is the factor width of Section~\ref{sec:twowidths}, and the cost of the
elimination is singly exponential in it.  The engine records the width it
achieved on every instance and every query, so that an instance the engine
cannot finish is reported as a width that crossed the threshold rather than as
an unexplained timeout, and a screen can decline an instance in advance by
running only the ordering and reading the width off it without computing a
count.

\paragraph{Engine~B: a diagram over the key variables}
Engine~B represents the version space directly.  It simulates the netlist
symbolically with the primary inputs of a query fixed, so that every net
becomes a Boolean function of the key bits alone; for each observed output it
conjoins the agreement between that net's function and the observed value, and
it conjoins these constraints across queries into one reduced ordered binary
decision diagram whose satisfying assignments are exactly $V_t$.  The count is
then one traversal of that diagram.  Because the diagram is over the key
variables only, its variable order ranges over a few dozen to a few hundred
bits rather than over the primary inputs, and dynamic reordering keeps that
order workable as constraints accumulate; the diagram is an object that
represents the surviving set and so it shrinks, not grows, as queries pin the
key down.

Engine~B exists in two forms that share neither source code nor data structure:
a self-contained Python implementation and a C implementation built on the CUDD
package~\cite{cudd}.  The C form counts with the arbitrary-precision routine
\texttt{Cudd\_ApaCountMinterm} rather than the double-precision
\texttt{Cudd\_CountMinterm}, because a $64$-bit key space is $2^{64}$ and a
minterm count returned as a double-precision float is already wrong above
$2^{53}$, well inside the range of interest; the count the paper reports is
therefore exact to the last digit and not a floating-point approximation.  The
two forms were written from the specification independently rather than one
ported from the other, so that a disagreement between them would point to a
defect, and during development that discipline did find defects that a single
implementation would have carried into the results.

\paragraph{Why two engines}
The two engines pay in different currencies, and that is the reason for keeping
both.  Engine~A pays in elimination width, which grows as each query adds
another copy of the residual circuit to the factor graph; Engine~B pays in
diagram size, which falls as each query removes keys from the set the diagram
represents.  They therefore fail in opposite regimes: Engine~A on the
many-query, wide-residual instances such as point-function locks, and Engine~B
on the instances whose version space is still enormous, which is exactly where
a lock is doing its job.  Where both finish they agree exactly, and together
with the brute-force enumeration available up to $k=20$ they give a four-way
check on every instance small enough to admit more than one method.  The pair
extends the reachable range in both directions at once, and
Section~\ref{sec:engines} exhibits a single benchmark, \bench{c432-RN640}, on
which the crossover is visible within the first few queries.

\subsection{Reproducibility and the Three-Machine Campaign}
\label{sec:repro}

Every number the key-counting application reports is tied to the
machine-readable record that produced it by a checking gate that carries, at
the time of writing, sixty-one assertions; each prose figure is matched against
the result file it came from, and a figure that is edited without updating its
source, or the reverse, fails the gate.  Figures and tables are regenerated
from the same result files rather than transcribed.  The counting kernel
carries its own suite, described in Section~\ref{sec:cec} for the spectral
case and extended here to key recovery, and it runs on every invocation so that
a released artifact that does not reproduce is caught before it is shipped.

The published campaign was executed across three machines, an Apple-silicon
macOS system and two Linux systems, with the work partitioned so that no
reported quantity mixes hardware in a way that would make it uninterpretable.
The counts, widths and version-space trajectories are exact and machine
independent, so they may be computed anywhere; the wall-clock measurements, by
contrast, are confined to a single machine, because a budget expressed in
seconds buys a different number of queries on different hardware and a table
that mixed two machines' times would not be a table.  The partition, the merge
step that reassembles the shards, and the gate that re-checks every prose
number after the merge are all part of the released artifact.

\subsection{A Unified View: One Method, Two Applications}
\label{sec:unified}

The two applications of this paper are the same computation viewed twice.  In
both, a hidden object is constrained by observations, the surviving candidates
form a version space, and the quantity of interest is its exact size.  In the
equivalence-checking application the object is a Boolean function, the
observations are modified-Haar coefficients, and the constraints live on the
dyadic block tree; in the locking application the object is a key, the
observations are oracle responses, and the constraints live on the gate-level
factor graph.  The counting engine is identical: a sum--product recursion over
a tree-structured or gate-structured factor graph, exact over the counting
semiring, singly rather than doubly exponential in the width of the graph
because the auxiliary variables it carries are determined by the ones being
counted rather than existentially quantified away.

The correspondence is more than an analogy, and it is useful in both
directions.  The spectral application is the one in which every quantity has a
closed form and can be checked against exhaustive enumeration, so it is where
the recursion, the ordering rule, and the exactness of the count are validated
against ground truth.  The locking application is the one in which no ground
truth is available, because the version spaces are astronomically large and
the point is precisely that they cannot be enumerated; it inherits the
confidence established in the spectral case and adds the two independent
engines as a cross-check where brute force runs out.  The independence-gap
result of Section~\ref{sec:cec} is the sharpest illustration of why the exact
count is worth the trouble in both settings: a practitioner who cannot compute
$|V_t|$ and multiplies per-observation probabilities instead is wrong by a
lattice index that grows with the coupling among the observations, always in
the direction that overstates how much of the secret survives.  In the
spectral case that error is computed exactly; in the locking case it is the
reason the advertised key length, which is exactly such a product estimate,
stands so far above the surviving entropy this paper measures.

\FloatBarrier
\section{Discussion}
\label{sec:disc}

\subsection{Status of the Equivalence-Checking Application}
\label{sec:cecstatus}

The counting problem that the 2002 method left open (Section~\ref{sec:cec}) is
now closed: the general recursion of Theorem~\ref{thm:dp} evaluates $\kc(C)$
for any conditioned set in time polynomial in the truth-table size, the two
special cases the 2002 work could reach are the extremes of the closed form of
Theorem~\ref{thm:anc}, and the error of the independence approximation it
otherwise resorted to is the computable lattice index of
Proposition~\ref{prop:gap}.  We stress what this does and does not claim.
Combinational equivalence checking as an engineering task has been settled by
satisfiability sweeping and by computer algebra, and a probabilistic checker
competes with neither; the contribution here is the completion of the
underlying counting theory and the exactly-verifiable setting it provides, not
a practical checker.  The value of the CEC application in this paper is
twofold: it finishes a problem that stood open for two decades, and it
validates against closed-form and brute-force ground truth the same machinery
that the locking application must run where no ground truth is available.

\subsection{Limitations of the Locking Application}
\label{sec:notdo}

It is worth being explicit about the boundaries, because a measurement is
easily mistaken for an attack.  Nothing here recovers a key faster than an
existing method, and nothing here is a new locking scheme.  What the method
adds is the ability to say, after each query, how many keys remain, and to say
in advance what a candidate query would be worth.
A distinguishing input is still found by a solver, and a defender who learns
that the version space is small has learned that the scheme is weak, not how
to repair it.  Nor is the machinery new in itself: the inference is the
generalized distributive law, the property that makes it applicable to a
netlist has been known in the counting literature for a decade, and computing
residual uncertainty by counting is routine in quantitative information flow.
What is new is that these have not been put together, and that doing so turns
a quantity the field has called out of reach into one it can report.

\subsection{The Uniform Prior over Keys}
\label{sec:priors}

Counts become posteriors under the uniform prior on the surviving version
space, and for key recovery that prior is more than a convenience: the secret
key is chosen by the designer and is uncorrelated with the netlist an attacker
holds, so before any query every key really is equally likely.  That is a
stronger position than the uniform prior occupies in most counting problems,
where it is a null model adopted for want of anything better.  One
qualification applies.  A key preprocessor that derives an internal key from a
chip-unique value breaks the correspondence between the key an attacker
recovers and the key that unlocks another die; the counting is unaffected,
since it concerns the internal key, but the security interpretation changes
and we do not pursue it here.

\subsection{Scope and What Remains}
\label{sec:scope}

The approach applies when observations are low-rate, meaning that a single
observation leaves many candidates consistent with it, and when the constraint
system admits an elimination order of small induced width.  Oracle queries
against a locked netlist satisfy the first comfortably, since one input
pattern constrains one output response and for a lock of any strength that
leaves a large key set.  Section~\ref{sec:expt} measures the second
directly.

Four things bound what we have shown.  The widths of Section~\ref{sec:e11} are
the widths of the instances we measured, and those are generated adders,
multipliers and comparators under reimplemented locking schemes; the published
release contributes counts rather than widths, so the width claim itself rests
on generated instances.  Under point-function locking the factor width grows
with the query count and the elimination engine stops early, which is inherent
because every query in such a scheme touches every key bit, and it means that
engine is weakest exactly where the counting would be most interesting; the
diagram engine carries the same instances to three hundred queries, so this
bounds one engine rather than the method.  Routing-network locking, which
Section~\ref{sec:lockfail} predicts will defeat the approach, remains
unmeasured, as do cyclic locking and the synthesized variants, for the reasons
given in Section~\ref{sec:screen}.  And while Section~\ref{sec:adversarial}
compares random queries against the patterns a satisfiability-based attack
selects, it does not order a fixed set of queries, so whether ordering by
expected information gain improves on either choice is still open.
Sequential locking is outside the scope of this paper.

\subsection{Extensions}
\label{sec:ext}

Three directions look worthwhile.  A sequential locked design unrolled to a
fixed depth is again a combinational factor graph with state variables shared
between frames, so the machinery applies unchanged and the open question is
how the width grows with unrolling depth.  For approximate keys, AppSAT works
with keys that are wrong on a small fraction of inputs~\cite{shamsi2017}, and
exact counting gives the size of the set of $\varepsilon$-approximate keys
directly by weighting rather than by sampling.  Most interesting to us is the
implication for scheme design: if it is the width of the factor graph that
makes counting affordable, then a construction that wants to be hard to
measure should raise that width rather than the width over key variables.
That is a design criterion nobody currently uses, and routing-network locking
may already satisfy it by accident.

\section{Conclusion}
\label{sec:concl}

This paper has treated two circuit-analysis problems as one, through the
version space: the set of hidden objects still consistent with a set of
observations, counted exactly and reported on a logarithmic scale.  In the
first application the objects are Boolean functions and the observations are
modified-Haar coefficients, and the version-space count is the confidence of a
probabilistic equivalence check; the general recursion, the closed forms, and
the exact independence-gap result close the counting problem that the 2002
method posed and left open.  In the second application the objects are secret
keys and the observations are oracle responses, and the same recursion, run
over the gate-level factor graph, measures what a locked netlist actually
hides.  The rest of this section summarizes the second application, whose
numbers are the paper's empirical contribution.

Key length is an upper bound of the secret rather than a measurement of it,
iteration count indicates how long an attack ran rather than how well it did, and
runtime indicates how hard the solver worked.  None of these three metrics provide the number of
keys an attacker can rule out, and that number is computable.  This
paper computes it exactly for locked netlists of realistic size.

The key counts demonstrate three locking schemes behaving in three quite different ways,
only one of which is well described by its advertised key length.  A
point-function lock removes exactly one key per query, which is what it is
designed to do.  An interference-maximizing lock resists sensitization and then
falls to eight random queries.  A random lock stalls at $8{,}192$ surviving
keys that no further query can separate, so thirteen of its twenty-four key
bits hold nothing at all.  For the obfuscation benchmarks, the same measurement
covers seventy instances across all ten ISCAS-85 circuits and finds that every
one of them retains less security than iadvertised, while a fourth family of
twenty-four other examples divides so sharply that one of its variants
loses nothing whatsoever to random querying even though a solver separates its
keys in under a second.

This last case is the one we emphasize, because it
illustrates where the instrument and intuition part company.  A plateau under
random querying is not by itself a proof of anything; a floor exists below
which no query set can drive the count, and it requires a certificate to indicate
whether a given plateau has been reached.  The suggestion that follows is a
modest one: report $|V_t|$ alongside iteration count when a locking scheme is
evaluated, and determine which locked circuits keep their advertised key length
once their equivalent keys are removed.

\section*{Code and Data Availability}
The companion code is released under the MIT license, and the archived release
carries a DOI that resolves to the most recent archived version.
\begin{center}
\url{https://github.com/mitch-thornton/locked-logic-key-counting}\\[2pt]
\url{https://doi.org/10.5281/zenodo.22218068}
\end{center}
The repository holds both counting engines, the reachability screen, the
plateau certificate, the campaign drivers, and a self-contained structural
Verilog reader, so that no external netlist front end is needed to read the
published benchmarks.  A single command runs a self-check on a fresh clone,
verifying every closed form against exhaustive enumeration and running the
engine parity gates.  The Trust-Hub obfuscation benchmarks are third party and
are not redistributed in the repository, which instead documents where to
obtain them and carries digests for confirming that a download matches the
archives used here.

\section*{Acknowledgment}
This work was not supported by any externally funded projects. An AI agent
was used for grammatical, editing and software development assistance.  
The author is responsible for all results and their description.

\bibliographystyle{IEEEtran}
\bibliography{j2_refs}

\end{document}